\newtheorem{theorem}{Theorem}[section]
\newtheorem{lemma}[theorem]{Lemma}
\theoremstyle{definition}
\newtheorem{assumption}[theorem]{Assumption}
\theoremstyle{remark}
\newtheorem{remark}[theorem]{Remark}
\newcommand{\mysection}[1]{\section{#1}
\setcounter{equation}{0}}
\newcommand{\norm}[1]{\left\Vert#1\right\Vert}
\newcommand{\abs}[1]{\left\vert#1\right\vert}
\renewcommand{\epsilon}{\varepsilon}
\begin{document}
\title{on a nonlinear recurrent relation}
\thanks{This material is based upon work supported by the National
Science Foundation under agreement No. DMS-0111298. Any options,
findings and conclusions or recommendations expressed in this
material are those of the authors and do not necessarily reflect
the views of the National Science Foundation.}

\author[D. Li]{Dong Li}
\address[D. Li]
{School of Mathematics, Institute for Advanced Study,
Einstein Drive, Princeton, NJ 08540, USA}
\email{dongli@math.ias.edu}


\keywords{Navier-Stokes, limiting behavior, quadratic map, recurrent relation.}

\begin{abstract}
We study the limiting behavior for the solutions of a nonlinear recurrent relation
which arises from the study of Navier-Stokes equations \cite{SL06}.
Some stability theorems are also shown concerning a related class of linear
recurrent relations.
\end{abstract}

\maketitle

\mysection{Introduction and main theorems}
In this paper we consider the following nonlinear recurrent relation,
\begin{align}  \label{07:recurr:main}
\Lambda_{p}(x)  = \frac 1 {p} \sum_{\substack{p_1+p_2=p\\p_1,p_2\ge 1}} f(\gamma) \Lambda_{p_1} (x)
 \Lambda_{p_2} (x), \quad \gamma=\frac {p_1} {p}, \; p> 1
\end{align}
where $\Lambda_1=x \in \mathbb R$ is a parameter, and $f: [0,1]\rightarrow \mathbb R$ has
integral $1$. This quadratic recurrent relation arises from our recent study of complex blow ups of the $3D$
Navier-Stokes system \cite{SL06}. There $f(\gamma)$ takes the special form
$f(\gamma) = 6\gamma^2 -10 \gamma +4$, and we need to show that for each initial value $\Lambda_1=x$,
there exists $R(x)$ such that
$$
\Lambda_p(x) = R(x)^p ( 1 + \delta_p ),
$$
and $\delta_p\rightarrow 0$ as $p\rightarrow \infty$. The main object of this paper is to
prove this claim for a general class of functions $f$ including our special function.

It is clear that $\Lambda_p(\lambda x) = \lambda^p \Lambda_p(x)$ for any $\lambda\ne 0$. Therefore
it suffices for us to show that there exists some $x^*$ such that
$ \Lambda_p(x^*) = 1 + \delta_p\rightarrow 1$ as $p \rightarrow \infty$. The limiting value
of $\Lambda_p(x^*)$ is $1$ since the function $f$ has integral $1$ over $[0,1]$. Clearly
$\Lambda_p(x) = (x/x^*)^p\Lambda_p(x^*)=R(x)^p (1+\delta_p)$ if we define $R(x) =x/x^*$.

We now state our conditions on the function $f$.

\begin{assumption} \label{07assump_main}
Let $\tilde f(\gamma) = f(\gamma)+f(1-\gamma)$ satisfy the following:
\begin{enumerate}
\item The integral of $\tilde f$ over $[0,1]$ is $2$.
\item $\tilde f$ is a polynomial, i.e.
\begin{align} \label{07:assump:1}
 \tilde f(\gamma) = \sum_{n=0}^N a_n \gamma^n
\end{align}
\item The equation in $\alpha$:
\begin{align} \label{07:assump:2}
\sum_{n=1}^N \frac {na_n} {n+2} \cdot \frac 1 {n+\alpha} =1,
\end{align}
has $N$ distinct roots $z_1, \cdots, z_N$, in $\mathbb C \cap \{ z: -1<Re(z) < 0 \}$.
Denote
$$
\alpha_f = 1 - \max_{k} Re(z_k).
$$
\end{enumerate}
\end{assumption}
One more technical condition, which is needed for the inductive stage of our proof,
and can be justified with easy numerics, will be stated in section 5 (see remark
\ref{07:rem:technical}).
We now state our main theorem.

\begin{theorem}[Main theorem]
Let assumption \ref{07assump_main} and the technical condition in
section 5 hold. Then there exists $x^*$ and constant $C>0$, such
that the solution $\Lambda_p(x)$ to the recurrent relation
\eqref{07:recurr:main} satisfies
$$
\abs{ \Lambda_p(x^*) -1 } \le \frac {C} { p^{\alpha_f}}, \quad \forall\ p>1.
$$
Consequently for any $x$, there exists a unique number $R=x/x^*$, such that
$$
\Lambda_p (x) = R^p \left( 1 + \frac {C_p} { p^{\alpha_f}} \right), \quad
\forall\ p>1,
$$
where $C_p$ does not depend on x and is uniformly bounded in $p$.
\end{theorem}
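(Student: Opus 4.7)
The scaling identity $\Lambda_p(\lambda x)=\lambda^p\Lambda_p(x)$ reduces the problem to producing a single value $x^*$ for which $\Lambda_p(x^*)\to 1$ at rate $p^{-\alpha_f}$; the general statement then follows on setting $R=x/x^*$. Writing $\epsilon_p:=\Lambda_p(x^*)-1$ and using the $\gamma \leftrightarrow 1-\gamma$ invariance of $\Lambda_{p_1}\Lambda_{p_2}$ to symmetrize $f$ into $\tilde f(\gamma)=f(\gamma)+f(1-\gamma)$, the recurrence rewrites as
\[
 \epsilon_p = S_p + (\mathcal L\epsilon)_p + (\mathcal Q\epsilon)_p,
\]
with source $S_p=\tfrac{1}{p}\sum_{p_1+p_2=p} f(\gamma)-1$, linear part $(\mathcal L\epsilon)_p=\tfrac{1}{p}\sum_{p_1+p_2=p}\tilde f(\gamma)\epsilon_{p_1}$, and quadratic piece $\mathcal Q\epsilon$. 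Since $\tilde f$ is a polynomial, every such discrete sum can be evaluated by Faulhaber's formula, giving $S_p=O(1/p)$ and an exact asymptotic expansion for $\mathcal L$ applied to any power mode $p^{-\alpha}$; the characteristic equation of Assumption~\ref{07assump_main}(3) should be precisely the condition identifying the exponents $z_k$ of the non-trivial resonant homogeneous modes of $\mathcal L$.

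Because the target rate $p^{-\alpha_f}$ is faster than both $S_p$ (of size $1/p$) and the leading resonant modes $p^{z_k}$ (whose magnitude is $p^{\mathrm{Re}\, z_k}$ with $-1<\mathrm{Re}\,z_k<0$), I cannot bound $\epsilon_p$ directly. Instead, I would construct a truncated \emph{trial solution}
\[
 \Psi_p = 1 + \sum_{j} c_j\,\phi_j(p),
\]
whose building blocks $\phi_j$ are elementary modes (negative integer powers of $p$, possibly dressed with $\log p$ factors to treat integer resonances, all of which are ruled out by the hypothesis $-1<\mathrm{Re}\,z_k<0$), with coefficients $c_j$ determined recursively so that $\Psi_p$ formally satisfies the recurrence modulo a residual of order $O(p^{-\alpha_f-\delta})$ for some $\delta>0$. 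The initial value $x^*:=\Psi_1$ absorbs the single free parameter of this formal family.

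Setting $\eta_p:=\Lambda_p(x^*)-\Psi_p$, I would then prove $|\eta_p|\le K_0\,p^{-\alpha_f}$ by strong induction on $p$. The quadratic term is of order $K_0^2\,p^{-\alpha_f-1}$: for $\alpha_f>1$ the convolution $\sum_{p_1+p_2=p}(p_1p_2)^{-\alpha_f}$ is boundary-dominated, of size $p^{-\alpha_f}$, and is absorbed by the $1/p$ prefactor. The linear term $\mathcal L\eta$ is the heart of the argument: since $\alpha_f$ is by construction strictly bounded away from every root of the characteristic equation, the symbol of $\mathcal L$ on $p^{-\alpha_f}$ is strictly separated from $1$, producing a contraction $\theta<1$ on the weighted class $\{|\eta_p|\le K_0\,p^{-\alpha_f}\}$; combined with the improved source this closes the induction, with the initial segment $p\le p_0$ absorbed into $K_0$. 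The main technical obstacle I foresee is the discrete-continuum mismatch in this spectral estimate: the Riemann-sum error inside $(\mathcal L\eta)_p$ is of size $1/p$ in its own right, and must be controlled uniformly on the critical class $p^{-\alpha_f}$, which is where I expect the easy numerical hypothesis previewed in Remark~\ref{07:rem:technical} to enter, supplying the quantitative margin $\theta<1$ needed.
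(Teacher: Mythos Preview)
Your decisive step --- that the symbol of $\mathcal L$ on $p^{-\alpha_f}$ being ``strictly separated from $1$'' yields a contraction $\theta<1$ on the weighted class --- does not hold, and this is where the argument breaks. That the symbol $m(w)=\int_0^1\tilde f(\gamma)\gamma^{w}\,d\gamma$ avoids $1$ at $w=-\alpha_f$ means only that $I-\mathcal L$ does not annihilate the single mode $p^{-\alpha_f}$; the weighted sup-norm of $\mathcal L$ on the full class $\{|\eta_{p_1}|\le K_0 p_1^{-\alpha_f}\}$ is controlled instead by $\int_0^1|\tilde f(\gamma)|\gamma^{-\alpha_f}\,d\gamma$, which is $\ge 2$ whenever $\tilde f\ge 0$ (as in the model case $\tilde f=12\gamma^2-12\gamma+4$) and in fact diverges once $\alpha_f>1$ and $\tilde f(0)\ne 0$. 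Since Assumption~\ref{07assump_main} forces $1<\alpha_f<2$, your inductive hypothesis yields only $|(\mathcal L\eta)_p|\lesssim K_0/p$, strictly slower than the target $K_0 p^{-\alpha_f}$; the induction cannot close. The paper explicitly names this the ``loss of control of constants'' obstruction. There is a second, related gap: the characteristic equation for your operator, $\sum_n a_n/(n{+}w{+}1)=1$, has one more root than \eqref{07:assump:2}, namely $w=1$ (because $\int_0^1\tilde f(\gamma)\gamma\,d\gamma=1$). Thus $\mathcal L$ carries a genuinely \emph{growing} mode $\epsilon_p\sim p$ coming from the scaling $x\mapsto\lambda x$, and setting $x^*:=\Psi_1$ from a truncated formal series cannot suppress it exactly.

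The paper's route is structurally different. It never fixes $x^*$ in advance; instead it tracks the exact sequence $a_p$ with $\Lambda_p(a_p)=1$, sets $\xi_p:=p^3(a_p/a_{p-1}-1)$, and by differencing in $p$ obtains the recurrence $\xi_p=\tfrac1p\sum_q G(q/p)\xi_q+h_p$ with kernel $G(\gamma)=\int_0^1\tilde f'(\gamma t)t^2\,dt$, whose characteristic roots are precisely the $z_k$ of Assumption~\ref{07assump_main} --- all with negative real part, so the unstable direction has been quotiented out. Even here the naive induction on $\xi_p$ fails for the same reason; the substitute is a stability lemma proved by passing to moment variables $B_p^{(k)}=\sum_{q\le p}\xi_q q^{\alpha_k}$, which satisfy an $N\times N$ matrix recursion whose products remain uniformly bounded. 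This converts a hypothesis $|h_r|\le C r^{\sigma(G)-\epsilon}$ into $|\xi_r|\le KC\,r^{\sigma(G)}$ without any contraction constant. One then inducts on $h_p$ rather than on $\xi_p$: since $h_p$ is genuinely quadratic in the $\xi_q$'s it decays like $p^{2\sigma(G)}$, comfortably beating $p^{\sigma(G)-\epsilon}$ and closing the loop. The numerical hypothesis of Remark~\ref{07:rem:technical} is used only to seed this induction for $p\le p_0$, not to produce a contraction margin $\theta<1$.
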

\begin{remark}
In the special case $f(\gamma)=6\gamma^2 - 10 \gamma +4$ \cite{SL06}, we have
$\tilde f(\gamma)=12\gamma^2 -12\gamma+4$,
and $z_{1,2} = \frac {-1\pm\sqrt{15} i} 2$ are the roots of \eqref{07:assump:2}.
Clearly the assumptions are satisfied with $\alpha_f = 3/2$. Our theorem says
that in this case
$$
\Lambda_p(x) = R^p \left( 1 + O(p^{-3/2}) \right).
$$
More examples can be easily constructed.
\end{remark}
The rest of this paper is organized as follows. In section 2 we derive the
linear system and give the main arguments. Section 3 includes the technical estimates
needed to derive the linear system. Section 4 contains some theorems on the behavior
of the solutions to the linearized system. Section 5 is devoted to the estimates of
the nonlinear term and the proof of the main theorem. Some elementary estimates with
explicit control of constants are deferred to the appendix.

\mysection{Preliminary analysis and linearization}
Since the limiting value of $\Lambda_p(x^*)$ is $1$, we linearize
\eqref{07:recurr:main} around $1$ and
this gives:
\begin{align*}
\Lambda_{p} (x) = \frac 1 {p} \sum_{p_1=1}^{p-1} f(\gamma)
+ \frac 1 {p} \sum_{p_1=1}^{p-1} \tilde f(\gamma)
 (\Lambda_{p_1}(x) -1)
\\ + \frac 1 {p} \sum_{p_1+p_2=p } f(\gamma) (\Lambda_{p_1}(x)-1)
  (\Lambda_{p_2}(x)-1),
\end{align*}
where $\tilde f(\gamma) = f(\gamma) +f(1-\gamma)$. Our main goal is to show
that there exists $x^*$ such that $\Lambda_p(x^*)\rightarrow 1$.
This motivates us to define $a_{p}$ such that
$\Lambda_{p}(a_{p}) =1$. When $p$ tends to infinity $a_p$ is a good approximation of $x^*$.
Now let us write
$$
\frac {a_{p}} {a_{p-1}} = 1 + \frac {\xi_p} {p^3}.
$$
The scaling $p^{-3}$ here is not intuitively obvious and in fact is not optimal since
as we shall see, $\abs{\xi_p}\le Const \cdot p^{-(\alpha_f-1)}$ as $p$ tends to infinity.
In terms of $\xi_p$ we have
\begin{align*}
\left( 1 + \frac {\xi_p} {p^3} \right)^{-p} &=  \frac 1 p
\sum_{p_1=1}^{p-1}  f(\gamma) +\frac 1 p \sum_{p_1=1}^{p-2} \tilde
f(\gamma)
 (\Lambda_{p_1}(a_{p-1}) -1) \\
 & \quad
 + \frac 1 p \sum_{p_1+p_2=p } f(\gamma) (\Lambda_{p_1}(a_{p-1})-1)
(  \Lambda_{p_2}(a_{p-1})-1).
\end{align*}
Or in a better form,
\begin{align} \label{eq07_27_10}
&\underbrace{p\left(1-\frac 1 p \sum_{p_1=1}^{p-1} f(\gamma) \right)}_{R_p^{(1)}}
+\underbrace{p\left(\left(1+\frac {\xi_p} {p^3} \right)^{-p} -1\right)}_{R_p^{(2)}}
- \underbrace{
\sum_{p_1=1}^{p-2} \tilde f(\gamma) \left( \prod_{p_1<q<p} \left(1+\frac {\xi_q} {q^3} \right)^{p_1} -1\right)
}_{Q_p} \nonumber\\
=&
\underbrace{
\sum_{p_1+ p_2 = p} f(\gamma)
\left(\prod_{p_1 <q <p} ( 1+ \frac {\xi_q} {q^3})^{p_1}-1 \right)
\cdot
\left( \prod_{p_2<q<p} ( 1+ \frac {\xi_q} {q^3} )^{p_2}-1 \right)
}_{N_p}.
\end{align}
We shall derive a linear system for $\xi_p$ from the above expression. First observe that
for small values of $p_1$, the summand in $Q_p$ are of order $1$ and therefore the main order of
$Q_p$ is not linear in $\xi_q$, $q\le p-1$. However as we shall show in section 3, we have
\begin{align*}
p( Q_p - Q_{p-1}) =\frac p {p-1} \xi_{p-1} - \frac 1 p \sum_{q=2}^{p-1} \xi_q G(q/p)+ O(\log p/p),
\end{align*}
where $G(\gamma)=\int_0^1 \tilde f^\prime(t\gamma) t^2 dt$ and $O(\log p /p)$ denotes terms of higher
order of smallness in $p$. As for $R_p^{(1)}$, since by assumption $f$ is a polynomial, it follows (by
direct summation) that
\begin{align*}
p(R_p^{(1)} - R_{p-1}^{(1)}) = O\left( \frac 1{p} \right).
\end{align*}
Similarly
\begin{align*}
p(R_p^{(2)} - R_{p-1}^{(2)}) = - \xi_p + \frac p {p-1} \xi_{p-1} + O\left( \frac 1{p^2} \right).
\end{align*}
The term $N_p$ will be estimated in section 5 and it is of higher order of smallness in $p$. Put all
these considerations together, we see that \eqref{eq07_27_10} is equivalent to
\begin{align} \label{eq27_10_19}
\xi_p = \frac 1 p \sum_{q=2}^{p-1} G(q/p) \xi_q + h_p,
\end{align}
where $h_p$ are of higher order in $p$.

The main difficulty of estimating $\xi_p$ by using \eqref{eq27_10_19} is "loss of control of constants".
To explain this problem, take our special function $f(\gamma)=6\gamma^2-10\gamma+4$ and this gives
$G(\gamma) = 6\gamma-4$. It is enough to consider the system
\begin{align*}
\xi_p = \frac 1 p \sum_{q=2}^{p-1} G(q/p) \xi_q = \frac 1 p \sum_{q=2}^{p-1} (6q/p-4) \xi_q.
\end{align*}
Suppose we want to prove by induction that
$\abs{\xi_r} \le A r^{-\alpha}$ for some $\alpha\ge 0$ and $A>0$. Then at step $p$ we will get
\begin{align*}
\abs{\xi_p} p^{\alpha} \le C_p: = A \frac 1 p \sum_{q=2}^{p-1} \abs{6q/p-4} \left(\frac q p\right)^{-\alpha}.
\end{align*}
As $p\rightarrow\infty$, clearly
\begin{align*}
C_p \rightarrow A \int_0^1 \abs{6\gamma-4} \gamma^{-\alpha} d\gamma \ge A \int_0^1 \abs{6\gamma-4}d\gamma >A.
\end{align*}
In other words we will not be able to justify the inductive hypothesis for $p$ sufficiently large.
For general function $\tilde f$ one can show easily that the
integral of $G$ over $[0,1]$ is $-1$ (assuming the integral of $\tilde f$ is 2). This implies that
$\int_0^1 \abs{G(\gamma)} d\gamma \ge 1$ and therefore this "loss of control of constants" problem is
generic.\footnote{Even if $G(\gamma)\le 0$ and $\int_0^1 G=-1$, due to the nonlinear corrections $h_p$ in
\eqref{eq27_10_19}, we still have "loss of control of constants".}

To solve this problem, we will first prove in section 4 a stability theorem concerning the linear system
\eqref{eq27_10_19}. And instead of inducting on $\xi_p$, we shall induct on $h_p$. The stability theorem
in section 4 gives us bounds on $\xi_p$ by using the (inductively assumed) bounds on $h_p$. Since $h_{p+1}$ is bounded by
quadratic functions of all $\xi_{q}$, $q\le p$, the bounds on $\xi_{q}$ then produce a strong decay estimate on $h_{p+1}$
(see lemma \ref{lem_07_13_20}). By using a slightly weaker induction hypothesis on $h_p$ (relative to the strong decay
estimate),
we can justify our inductive bound at step $p+1$ at the sacrifice of assuming $p$ to be
sufficiently large. We are able to close our argument because of the genuine nonlinear nature of $h_p$.
\mysection{The estimates of $Q_p-Q_{p-1}$}
In this section we give the technical estimates of $Q_p-Q_{p-1}$.
By definition of $Q_p$, we have
\begin{align*}
 Q_p - Q_{p-1}
=& \sum_{p_1=1}^{p-2} \tilde f(\gamma) \cdot
\left( \prod_{p_1<q <p} \left( 1+ \frac {\xi_q} {q^3} \right)^{p_1}
-\prod_{p_1<q <p-1} \left( 1+ \frac {\xi_q} {q^3} \right)^{p_1} \right) \\
& +\sum_{p_1=1}^{p-3} ( \tilde f(\gamma) -\tilde f(\gamma') ) \cdot
\left( \prod_{p_1<q <p-1} \left( 1+ \frac {\xi_q} {q^3} \right)^{p_1} -1 \right)\\
=& \text{(I)}+\text{(II)}.
\end{align*}
We shall show that in the main order of magnitude, we have
\begin{align*}
\text{(I)} \approx \frac {\xi_{p-1}} {p-1},
\end{align*}
and
\begin{align*}
\text{(II)} \approx -\frac 1 {p^2} \sum_{q=2}^{p-1} \xi_q G(q/p).
\end{align*}
Throughout this section we make the following ansatz on $\xi_q$, $q<p$:
\begin{assumption} \label{assumption_07_400}
Let $A_1,A_2, A_3$ be positive constants. Let $N_0\ge \max\{3, A_1\}$ be
a positive integer, and $p>N_0$ such that:
\begin{enumerate}
\item $\abs{\xi_q} \le A_1$, for any $N_0\le q<p$.
\item $\abs{\xi_q} \le A_3$, for any $1\le q\le N_0$.
\item $\abs{ \prod_{p_1<q \le N_0} \left( 1 + \frac {\xi_q} {q^3} \right)^{p_1}} \le A_2$,
for any $1\le p_1 <N_0$.
\end{enumerate}
\end{assumption}
Denote $G(\gamma)= \int_0^1 \tilde f^\prime(\gamma t) t^2 dt$, we have the following main lemma.
\begin{lemma}[Main lemma] \label{Lem_07_main_11}
Let assumption \ref{assumption_07_400} hold.
Then there exists a constant
$C=C \left(\norm{\tilde f}_\infty, \norm{\tilde f^\prime}_\infty, \norm{\tilde f^{\prime\prime}}_\infty,
A_1,A_2,A_3,N_0 \right)$, such that
\begin{align*}
\abs{ p(Q_p -Q_{p-1})  - \frac p {p-1} \xi_{p-1}+\frac 1 p \sum_{q=2}^{p-1} G(q/p) \xi_q  } \le C \frac {\log p} {p}.
\end{align*}
\end{lemma}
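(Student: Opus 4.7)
The plan is to estimate (I) and (II) separately, then combine, verifying that the $\frac{p}{p-1}\xi_{p-1}$ piece emerges from (I) and the $-\frac{1}{p}\sum_q \xi_q G(q/p)$ piece from (II). Throughout I will abbreviate $P_{p_1} := \prod_{p_1 < q < p-1}(1 + \xi_q/q^3)^{p_1}$ (with the convention $P_{p-2} = 1$).

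For (I), I would first factor out the single $q = p-1$ term that distinguishes the two products, writing
\[
\text{(I)} = \sum_{p_1=1}^{p-2} \tilde f(p_1/p)\,\left[\left(1 + \frac{\xi_{p-1}}{(p-1)^3}\right)^{p_1} - 1\right] P_{p_1}.
\]
An elementary expansion (to be placed in the appendix) gives $(1+z)^{p_1} - 1 = p_1 z + O(p_1^2 z^2)$ whenever $\abs{p_1 z}$ is bounded; with $z = \xi_{p-1}/(p-1)^3$ this yields $p_1 \xi_{p-1}/(p-1)^3 + O(p_1^2/p^6)$. For $P_{p_1}$ I split the range of $q$ at $N_0$: clause (3) of Assumption \ref{assumption_07_400} bounds the $q \le N_0$ part, while for $q > N_0$ clause (1) together with $\sum_{q > p_1} p_1/q^3 = O(1/p_1)$ gives $P_{p_1} = 1 + O(1/p_1)$. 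Replacing $\sum_{p_1=1}^{p-2} p_1 \tilde f(p_1/p)$ by $p^2 \int_0^1 \gamma \tilde f(\gamma)\,d\gamma + O(p)$ via Euler--Maclaurin (permissible since $\tilde f$ is a polynomial), and using the symmetry identity $\int_0^1 \gamma \tilde f(\gamma)\,d\gamma = 1$ which follows from $\tilde f(\gamma) = \tilde f(1-\gamma)$ and clause (1) of Assumption \ref{07assump_main}, I obtain $p \cdot \text{(I)} = \frac{p}{p-1}\xi_{p-1} + O(\log p / p)$.

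For (II), Taylor expansion gives $\tilde f(p_1/p) - \tilde f(p_1/(p-1)) = -\frac{p_1}{p(p-1)} \tilde f'(p_1/p) + O(p_1^2/p^4)$, and linearization of the product yields $P_{p_1} - 1 = \sum_{q=p_1+1}^{p-2} p_1 \xi_q/q^3 + (\text{quadratic remainder})$. Swapping the order of summation, the main term becomes
\[
-\frac{1}{p(p-1)} \sum_{q=2}^{p-2} \frac{\xi_q}{q^3} \sum_{p_1=1}^{q-1} p_1^2 \tilde f'(p_1/p),
\]
and a Riemann sum estimate together with the substitution $s = t(q/p)$ identifies the inner sum as $p^3 \int_0^{q/p} s^2 \tilde f'(s)\,ds + O(p^2) = q^3 G(q/p) + O(p^2)$. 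This produces $-\frac{1}{p-1}\sum_q \xi_q G(q/p) + O(\log p/p)$; the prefactor $\frac{1}{p-1}$ may be replaced by $\frac{1}{p}$ at the cost of an extra $O(1/p)$ error since $\sum_q \abs{\xi_q G(q/p)} = O(p)$ under the ansatz.

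The most delicate step is uniform control of the small-$p_1$ regime in both (I) and (II), where neither the expansion of $(1+z)^{p_1}-1$ nor the linearization of $P_{p_1}-1$ is accurate because $p_1$ can be as small as $1$ while $q$ ranges over $\Theta(p)$ values. This is exactly what clauses (2) and (3) of Assumption \ref{assumption_07_400} are designed to absorb: they yield an $O(1)$ bound for each small-$p_1$ summand, so the whole small-$p_1$ block contributes at most $O(1/p)$ after the overall $1/p$ prefactor. The logarithm in the error arises from Euler--Maclaurin remainders for weighted Riemann sums (with weights like $q^{-1}$ coming from the quadratic corrections in $P_{p_1} - 1$) and from harmonic tails $\sum_q 1/q$; these estimates are routine but bookkeeping-heavy, and must be tracked explicitly to preserve uniformity in $A_1, A_2, A_3, N_0$ so that the final constant $C$ depends only on the quantities stated in the lemma.
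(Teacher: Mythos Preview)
Your approach is essentially the paper's: the same (I)/(II) split, the same extraction of $\xi_{p-1}/(p-1)$ from (I) via the single $q=p-1$ factor, and the same Taylor-then-linearize-then-swap-sums treatment of (II), with Assumption~\ref{assumption_07_400} used exactly as you say to absorb the small-$p_1$ block. The paper organizes the six error pieces as $e_p^{(1)},\dots,e_p^{(6)}$ and tracks all constants explicitly via the appendix lemmas, but the skeleton is identical.

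One point needs correction. In your treatment of (II), the Riemann-sum error for the inner sum $\sum_{p_1<q} p_1^2 \tilde f'(p_1/p)$ is \emph{not} uniformly $O(p^2)$: if you feed $O(p^2)$ into
\[
-\frac{1}{p(p-1)}\sum_{q=2}^{p-2}\frac{\xi_q}{q^3}\bigl[q^3 G(q/p)+O(p^2)\bigr],
\]
the error term becomes $O\!\bigl(\sum_q |\xi_q|/q^3\bigr)=O(1)$, far too large. The correct bound is $q$-dependent, of order $O(q^2)$ (equivalently, the paper's estimate $\bigl|\tfrac{1}{p-1}\sum_{p_1<q}\tilde f'(\gamma)\gamma^2-\int_0^{q/p}\tilde f'(s)s^2\,ds\bigr|\lesssim q^3/p^4+q^2/p^3$ after unscaling). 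With that, the error collapses to $\tfrac{1}{p^2}\sum_q |\xi_q|/q = O(\log p/p^2)$, and after the overall factor $p$ you recover the claimed $O(\log p/p)$---which is consistent with your own remark that the logarithm comes from a harmonic tail $\sum_q 1/q$.
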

\begin{proof}
Recall that $Q_p - Q_{p-1} = \text{(I)}+\text{(II)}$ and
we will estimate (I) and (II) separately. First note that
$\int_0^1 \tilde f(\gamma) \gamma d\gamma =1$. Using this fact we have
\begin{align*}
\text{(I)}=& \sum_{p_1=1}^{p-2} \tilde f(\gamma) \left(
\prod_{p_1< q < p}\left(1 + \frac {\xi_q} {q^3} \right)^{p_1} -1\right)
\cdot \left( 1 -\left(1+ \frac {\xi_{p-1}}{(p-1)^3} \right)^{-p_1} \right) \\
&\;+\sum_{p_1=1}^{p-2} \tilde f(\gamma) \left( 1 - \left(1+ \frac {\xi_{p-1}}{(p-1)^3}\right)^{-p_1}
 - \frac {p_1 \xi_{p-1}} {(p-1)^3} \right) \\
 &\; + \left( \sum_{p_1=1}^{p-2} \tilde f(\gamma) \frac {p_1} {(p-1)^2} - \int_0^1 \tilde f(\gamma)
 \gamma d\gamma \right) \frac {\xi_{p-1}} {p-1}+  \frac {\xi_{p-1}} {p-1}\\
=& e_p^{(1)} + e_p^{(2)} + e_p^{(3)}+\frac {\xi_{p-1}} {p-1}.
 \end{align*}
\texttt{Estimate of $e_p^{(1)}$}: clearly
\begin{align*}
\abs{ e_p^{(1)}} \le &\norm{\tilde f}_\infty \sum_{p_1=1}^{p-2}
\abs{ \prod_{p_1<q<p} \left(1+\frac {\xi_q} {q^3} \right)^{p_1}-1}
\cdot \frac {p_1 A_1} {(p-1)^3} \\
\le &\norm{\tilde f}_\infty \sum_{N_0 \le p_1 \le p-2} \frac {A_1} {p_1} \cdot \frac {p_1 A_1} {(p-1)^3}
+ \\
&\;+ \norm{\tilde f}_\infty\sum_{1\le p_1 <N_0} \left( 1 + A_2 (1+\frac A{N_0} )\right)
\cdot \frac {p_1 A_1} {(p-1)^3} \\
\le &\norm{\tilde f}_\infty\left(
\frac {A_1^2} {(p-1)^2} +
\frac {(1+A_2 (1+\frac {A_1} {N_0})) {A_1} N_0^2}{2(p-1)^3} \right).
\end{align*}

\texttt{Estimate of $e_p^{(2)}$}: first it is rather easy to show that
\begin{align*}
\abs{ 1 - \left( 1 + \frac {\xi_{p-1}} {(p-1)^3} \right)^{-p_1}
} \le \frac {p_1 A_1} {(p-1)^3}.
\end{align*}
\begin{align*}
\abs{ \left( 1+ \frac {\xi_{p-1}} {(p-1)^3} \right)^{p_1} -1
- \frac {p_1 \xi_{p-1}} {(p-1)^3}}
\le \frac {3p_1^2 A_1^2} {2(p-1)^6},
\end{align*}
and also
\begin{align*}
\abs{ \left( 1+ \frac {\xi_{p-1}} {(p-1)^3} \right)^{-p_1} -1
+ \frac {p_1 \xi_{p-1}} {(p-1)^3}}
\le \frac {3p_1^2 A_1^2} {2(p-1)^6}.
\end{align*}
Now it follows easily that
\begin{align*}
\abs{e_p^{(2)}} \le & \norm{\tilde f}_\infty \sum_{p_1=1}^{p-2} \frac {3p_1^2 A_1^2} {2(p-1)^6} \\
\le &\norm{\tilde f}_\infty \frac {A_1^2} {2(p-1)^3}.
\end{align*}

\texttt{Estimate of $e_p^{(3)}$}:
This estimate actually does not require assumption \ref{assumption_07_400}.
The only assumption needed here is that $p\ge 4$. Although this is a standard estimate,
for the purpose of explicit control of constants, we give the full details here. Clearly
\begin{align*}
& \abs{ \sum_{p_1=1}^{p-2} \tilde f(\gamma) \cdot \frac {p_1} {p-1} \cdot
    \frac 1 {p-1} - \int_0^1 \tilde f(\gamma)\gamma d\gamma
} \\
\le & \sum_{p_1=1}^{p-2} \int_{\frac {p_1-1} {p-1}}^{\frac {p_1} {p-1}}
\abs{ \tilde f\left( \frac {p_1} p \right)\cdot \frac {p_1} {p-1}
- \tilde f(\gamma) \gamma}d\gamma + \int_{\frac {p-2} {p-1}}^1 \abs{\tilde f(\gamma)} \gamma d\gamma\\
\le & \sum_{p_1=1}^{p-2} \int_{\frac {p_1-1} {p-1}}^{\frac {p_1} {p-1}}
\norm{\tilde f^\prime}_\infty \abs{\gamma-\frac {p_1} p} \frac {p_1} {p-1} d\gamma
+ \norm{\tilde f}_\infty \cdot \frac {p-2} {2(p-1)^2} +\frac 1 {p-1} \norm{\tilde f}_\infty \\
\le & \frac 1 {4(p-1)} \norm{\tilde f^\prime}_\infty
+ \frac 3 {2(p-1)} \norm{\tilde f}_\infty.
\end{align*}

The estimates of (II) are similar. Write
 \begin{align*}
 \text{(II)} = & \sum_{p_1=1}^{p-3} \left( \tilde f(\gamma) - \tilde f(\gamma^\prime) + \tilde f^\prime(\gamma)
 \cdot \frac {\gamma} {p-1} \right) \left( \prod_{p_1<q < p-1} \left( 1+ \frac {\xi_q} {q^3} \right)^{p_1} -1 \right)\\
 &\;+ \sum_{p_1=1}^{p-3} \left( - \tilde f^\prime(\gamma) \cdot \frac {\gamma} {p-1} \right)
 \cdot  \left( \left( 1+ \frac {\xi_q} {q^3} \right)^{p_1} -1 - p_1 \sum_{p_1<q<p} \frac {\xi_q} {q^3} \right) \\
 &\;+\left( \sum_{p_1=1}^{p-3} \left( - \tilde f^\prime(\gamma) \cdot \frac {\gamma} {p-1} \right)
 \cdot \left( p_1 \sum_{p_1<q<p} \frac {\xi_q} {q^3} \right) + \frac 1 {p^2} \sum_{q=2}^{p-2} \xi_q G(q/p) \right)\\
 &\; - \frac 1 {p^2} \sum_{q=2}^{p-2} \xi_q G(q/p) \\
=& e_p^{(4)} + e_p^{(5)} + e_p^{(6)} - \frac 1 {p^2} \sum_{q=2}^{p-2} \xi_q G(q/p)
\end{align*}
\texttt{Estimate of $e_p^{(4)}$ and $e_p^{(5)}$}: by Taylor expansion on $\tilde f$ we have
\begin{align*}
\abs{\tilde f(\gamma) - \tilde f(\gamma^\prime) + \tilde f^\prime(\gamma)
 \cdot \frac {\gamma} {p-1} } \le \frac 1 2 \norm{\tilde f^{\prime\prime}}_{\infty}
\cdot \frac {\gamma^2} {(p-1)^2}.
\end{align*}
Then by lemma \ref{lem_07_16_34} in the appendix,
\begin{align*}
\abs{e_p^{(4)}}\le  \norm{\tilde f^{\prime\prime}}_{\infty}\left(
\frac {A_1} {4p^2} + \frac {N_0^3} {6(p-1)^2 p^2} \left( 1 +A_2(1+ \frac {A_1} {N_0}) \right)
\right).
\end{align*}
Similarly the estimate of $e_p^{(5)}$ follows from lemma \ref{lem_07_16_39} in the
appendix:
\begin{align*}
\abs{e_p^{(5)}}
\le  \norm{\tilde f^\prime}_\infty \left(\frac {A^2 \log (p-3) } {4p(p-1)} +
\frac 1 {p(p-1)} \left(
 \frac {1+A_2} 2 N_0^2 + \frac {(3A_2+1) {A_1} +3 A_3} 6 N_0 \right) \right).
\end{align*}

\texttt{Estimate of $e_p^{(6)}$}: First note that
 \begin{align*}
& \abs{
 \frac 1 {p-1}\sum_{1\le p_1 <q} \tilde f^\prime (\gamma) \gamma^2
- \int_0^{ \frac {q} p} \tilde f^\prime(\gamma) \gamma^2 d\gamma
} \\
\le & \sum_{p_1=1}^{q-1} \int_{\frac {p_1-1} p}^{\frac {p_1} p}
\abs{ \frac p {p-1} \tilde f^\prime\left( \frac {p_1} p\right) \cdot
 \left(\frac {p_1} p\right)^2 - \tilde f^\prime(\gamma) \gamma^2 } d\gamma
+ \int_{\frac {q-1}p}^{\frac q p} \abs{\tilde f(\gamma)} \gamma^2 d\gamma\\
\le & \sum_{p_1=1}^{q-1} \left( \norm{\tilde f^\prime}_\infty \frac {p_1^2} {p^3 (p-1)}
+ \left( \norm{\tilde f^{\prime\prime}}_\infty + 2 \norm{\tilde f^\prime}_\infty \right)
\frac {p_1} {p^3} \right) + \norm{\tilde f^\prime}_\infty \frac {q^2} {p^3} \\
\le & \norm{\tilde f^\prime}_\infty \left( \frac {q^3} {3p^4} + \frac {2q^2} {p^3} \right)
+ \norm{ \tilde f^{\prime\prime} }_\infty \frac {q^2} {2p^3}.
 \end{align*}
Then
\begin{align*}
&\sum_{p_1=1}^{p-3} \tilde f^\prime(\gamma) \cdot \frac {\gamma} {p-1}
 \sum_{p_1<q<p-1} \frac {p_1 \xi_q} {q^3} \\
=& \frac 1 {p^2} \sum_{q=2}^{p-2} \xi_q \sum_{1\le p_1<q} \frac {p^3} {q^3}
\frac 1 {p-1} \tilde f^\prime(\gamma) \cdot \gamma^2
\end{align*}
Now it follows easily that
\begin{align*}
\abs{e_p^{(6)}}= &\abs{ \frac 1 {p^2} \sum_{q=2}^{p-2} \xi_q \sum_{1\le p_1<q} \frac {p^3} {q^3}
\frac 1 {p-1} \tilde f^\prime(\gamma) \cdot \gamma^2
- \frac 1 {p^2} \sum_{q=2}^{p-2} \xi_q G(q/p) } \\
\le & \frac 1 {p^2} \sum_{q=2}^{p-2} \abs{\xi_q} \left(
\norm{\tilde f^\prime}_\infty \left( \frac 1 {3p} + \frac 2 q \right)
+ \norm{\tilde f^{\prime\prime}}_\infty \frac 1 {2q} \right).
\end{align*}
The lemma is proved.
\end{proof}

\mysection{Behavior of solutions to the linear system}
In this section we shall study the linear system defined by
$$
\xi_p = \frac 1 p \sum_{q=2}^{p-1} G(q/p) \xi_q + h_p.
$$
As was already mentioned in section 2, this linear system is obtained from the
nonlinear relation \eqref{eq07_27_10}. We shall now free ourselves from the
constraint that $h_p$ is a quadratic function of all $\xi_q$, $q\le p$. Instead we
assume here that $h_p$ is known. Under some conditions on the function $G$ and the sequence $h_p$, we
show that $ \xi_p \approx const \cdot p^{\sigma}$, where $\sigma=\sigma(G)$ is
to be defined later in this section. We begin with a simple lemma.
\begin{lemma} \label{lem24.2.M}
Let $N\ge 1$. Assume
$\{C_i\}_{1\le i\le N}$, $\{\alpha_i\}_{1\le i\le N}$,
$\{ \sigma_i \}_{1\le i\le N}$ are complex numbers such that:
\begin{enumerate}
\item $\alpha_i \ne \alpha_j$, $\sigma_i \ne \sigma_j$, if $i\ne j$.

\item 
$
\sum_{i=1}^N \frac {C_i}{\sigma_k +\alpha_i+1} = 1,
\quad\forall\ 1\le k\le N.
$

\item $Re(\sigma_k)\le 0$, $\forall\ 1\le k\le N$.
\end{enumerate}
Suppose $M_p \in {\mathbb R}^{N\times N}$, $p\ge 2$
is a sequence of matrices defined by
$$
M_p^{(k,j)} =
\left(1 - \frac 1 p \right)^{\alpha_k+1} \delta_{kj}
+C_j \cdot \left(1 -\frac 1 p \right)^{\alpha_j+1}
\cdot \frac 1 p.
$$
Then there exists an a constant $K>0$ which depends only
on $\alpha_i$,$\sigma_i$, $C_i$, such that
\begin{align*}
\Bigl\| M_p M_{p-1}\cdots M_{q_0+1}M_{q_0}
\Bigr\|_2 \le K,
\end{align*}
for any $q_0 \ge 2$, $p\ge 3$,
where $\norm{\cdot}_2$ denotes the matrix spectral norm.
\end{lemma}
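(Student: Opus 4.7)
The strategy is to diagonalize each $M_p$ explicitly and control the diagonalization uniformly in $p$. First I would recognize $M_p$ as a rank-one perturbation of a diagonal matrix: $M_p=D_p+p^{-1}\mathbf{1}\, w_p^{T}$, where $D_p=\mathrm{diag}((1-p^{-1})^{\alpha_k+1})$, $\mathbf{1}=(1,\dots,1)^T$, and the $j$-th entry of $w_p$ is $C_j(1-p^{-1})^{\alpha_j+1}$. The resulting secular equation
\begin{equation*}
1+\frac{1}{p}\sum_{j=1}^N\frac{C_j(1-p^{-1})^{\alpha_j+1}}{(1-p^{-1})^{\alpha_j+1}-\lambda}=0
\end{equation*}
has $N$ roots $\lambda_1(p),\dots,\lambda_N(p)$. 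Substituting $\lambda=(1-p^{-1})^{-\sigma}$ and expanding in $1/p$, hypothesis (2) identifies the roots as $\lambda_k(p)=(1-p^{-1})^{-\sigma_k}+O(p^{-2})$, so that $|\lambda_k(p)|^2=1+2\,\mathrm{Re}(\sigma_k)/p+O(p^{-2})\le 1+O(p^{-2})$ by hypothesis (3).

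Next I would compute the eigenvectors: the eigenvector for $\lambda_k(p)$ has $j$-th component proportional to $[(1-p^{-1})^{\alpha_j+1}-\lambda_k(p)]^{-1}$, so after normalizing by a factor of $-1/p$ the vector converges to $v^{(k)}_\infty$ with $(v^{(k)}_\infty)_j=1/(\alpha_j+\sigma_k+1)$. Arranging these into a matrix $P_p$, the limit $P_\infty$ is a Cauchy matrix in the (pairwise distinct) nodes $\{\alpha_j\}$ and $\{-\sigma_k-1\}$, hence invertible. A careful expansion gives $P_p=P_\infty+O(1/p)$ with $\|P_p-P_{p-1}\|_2=O(1/p^2)$, so $\|P_p^{-1}P_{p-1}-I\|_2=O(1/p^2)$ for all sufficiently large $p$. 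Writing $M_p=P_p\Lambda_p P_p^{-1}$ and telescoping,
\begin{equation*}
M_p M_{p-1}\cdots M_{q_0}=P_p\,\Lambda_p\,(P_p^{-1}P_{p-1})\,\Lambda_{p-1}\cdots(P_{q_0+1}^{-1}P_{q_0})\,\Lambda_{q_0}\,P_{q_0}^{-1}.
\end{equation*}
The spectral norm of the bracketed product is bounded by $\prod_{q\ge q_0}\|\Lambda_q\|_2\cdot\|P_q^{-1}P_{q-1}\|_2\le\prod_{q\ge 2}(1+C/q^2)<\infty$, while $\|P_p\|_2$ and $\|P_{q_0}^{-1}\|_2$ are uniformly bounded. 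This yields a constant $K$ independent of $p$ and $q_0$.

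The main obstacle is ensuring that both $\lambda_k(p)$ and $P_p$ have expansions in $1/p$ accurate to second order, so that the perturbative estimates come in at the $O(1/p^2)$ level (summable!) rather than $O(1/p)$, which would produce a spurious logarithmic growth in the product. This is delicate for the eigenvectors, since one must invert a system whose leading part is of size $O(1/p)$ and verify that the $1/p$ correction to the eigenvector depends smoothly on $p$; here it helps that the Cauchy structure of $P_\infty$ gives an explicit invertibility modulus. A secondary technical point is the start-up for small $q_0$: for $p$ and $q_0$ confined to any fixed bounded range the conclusion is immediate (finite product of matrices with bounded entries), so one needs the asymptotic regime only and absorbs the finite range into $K$.
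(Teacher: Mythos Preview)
Your argument is correct, but the paper's proof is considerably more economical and avoids precisely the obstacle you single out. Rather than diagonalizing each $M_p$ exactly with a $p$-dependent similarity $P_p$, the paper expands
\[
M_p = I + \frac{1}{p}\,\tilde M + O\!\left(\frac{1}{p^2}\right),
\qquad \tilde M^{(k,j)} = C_j - (\alpha_k+1)\delta_{kj},
\]
and observes (via the same rank-one secular computation you carried out) that the characteristic equation of the \emph{fixed} matrix $\tilde M$ is exactly $\sum_i C_i/(\lambda+\alpha_i+1)=1$, so $\tilde M$ has the $N$ distinct eigenvalues $\sigma_1,\dots,\sigma_N$ and is diagonalized by a single $S$ independent of $p$: $\tilde M = SDS^{-1}$. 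Conjugating all factors simultaneously gives $M_p = S\bigl(I + p^{-1}D + O(p^{-2})\bigr)S^{-1}$, and since $\mathrm{Re}(\sigma_k)\le 0$ one has $\|I+p^{-1}D+O(p^{-2})\|_2 \le 1 + C_1/p^2$; the bound on the product follows immediately from submultiplicativity. In effect the paper diagonalizes only the first-order coefficient, so the telescoping factors $P_p^{-1}P_{p-1}$ never appear and no perturbation theory for eigenvectors is needed. Your route buys exact spectral information about each $M_p$ (and the pleasant identification of $P_\infty$ as a Cauchy matrix), at the cost of the second-order eigenvector analysis; the paper's route trades that information for a one-line conclusion.
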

\begin{proof}
We have
\begin{align*}
M_p^{(k,j)} &= \left(1 - \frac{\alpha_k+1} p \right)
\delta_{kj} + \frac {C_j} {p}
+ O\left( \frac 1 {p^2} \right) \\
& = \delta_{kj} + \frac{C_j - (\alpha_k+1)\delta_{kj}} p
+ O\left( \frac 1 {p^2} \right).
\end{align*}
Consider the matrix $\tilde M$ defined by
$$
\tilde M^{(k,j)} = C_j -(\alpha_k+1)\delta_{kj}.
$$
It is not difficult to show that the equation for
eigenvalues of $\tilde M$ $\det{( \tilde M -\lambda I)} =0$ is equivalent to:
$$
\sum_{i=1}^N \frac {C_i} {\lambda+\alpha_i+1} =1.
$$
By our assumption $\tilde M$ has $N$ distinct eigenvalues $\sigma_1,\cdots, \sigma_N$.
Let $D$ be the diagonal matrix such that $D_{ii}=\sigma_i$.
It follows easily that there exist $S,S^{-1}\in \mathbb C^{N\times N}$, such that
$$
M_p= S \left( I +  \frac 1 p D+ O\left( \frac 1 {p^2} \right) \right) S^{-1}.
$$
Since $ \max_{1\le k\le N} Re(\sigma_k) \le 0$, we have
$$
\norm{ I + \frac 1 p D +O\left( \frac 1 {p^2} \right) }_2 \le 1 + \frac {C_1} {p^2},
$$
where $C_1$ is a constant independent of $p$. Now the theorem follows by using
the submultiplicative property of the matrix spectral norm.
\end{proof}
In what follows, we shall assume that $G(\cdot)$ is a finite sum of
(generalized) monomials, i.e.
\begin{align} \label{eq23.11.G1}
G(\gamma) = \sum_{i=1}^N C_i \gamma^{\alpha_i},
\end{align} where
$\alpha_i$, $C_i$ are complex numbers, in particular, note
that $\alpha_i$ need not be integers. Here for $0<\gamma\le 1$ and $\alpha=a+bi$,
$\gamma^{\alpha}$ is simply defined as $\exp\{(a+bi) \log \gamma \}$ where
$\log \gamma$ is real-valued.
The main assumption on $G$ is the
following:\\
Assume that the equation in $\lambda$
\begin{align} \label{eq23.11.G2}
\sum_{i=1}^N \frac {C_i} {\lambda+\alpha_i +1} =1,
\end{align}
has $N$ distinct roots in $\mathbb C$. \\
Denote the $N$ distinct roots by $\sigma_1, \cdots, \sigma_N$,
and define
\begin{align} \label{eq23.11.G3}
\sigma(G) = \max_{1\le i\le N} Re(\sigma_i).
\end{align}
$\sigma$ will be called the characteristic exponent of $G$.
We prove the following theorem concerning a linear
recurrent system generated by $G$:
\begin{theorem}  \label{24.22.thm}
Assume that $G$ satisfies \eqref{eq23.11.G1} and
\eqref{eq23.11.G2}.
Consider the linear recurrent system defined by:
$$
\xi_p = \frac 1 p \sum_{q=2}^{p-1} G(q/p) \xi_q,\quad p\ge 3,
$$
with $\xi_2$ as a parameter. Then there exists a constant
$C$ depending only on $G$ such that
\begin{align}   \label{eq23.36.xi}
\abs{\xi_p} \le C\cdot p^{\sigma}
\cdot \abs{\xi_2}, \quad \forall\ p>1,
\end{align}
where the characteristic exponent
$\sigma=\sigma(G)$ is defined in \eqref{eq23.11.G3}.
\end{theorem}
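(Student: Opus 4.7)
The plan is to convert the scalar recursion into a vector recursion and invoke Lemma~\ref{lem24.2.M}. First I would normalize away the exponent $\sigma=\sigma(G)$: setting $\eta_q := q^{-\sigma}\xi_q$, the recursion becomes
\begin{equation*}
\eta_p \;=\; \frac{1}{p}\sum_{q=2}^{p-1}\tilde G(q/p)\,\eta_q, \qquad \tilde G(\gamma) := G(\gamma)\gamma^\sigma = \sum_{i=1}^N C_i\gamma^{\alpha_i+\sigma}.
\end{equation*}
The associated equation $\sum_i C_i/(\lambda+\alpha_i+\sigma+1)=1$ has roots $\sigma_k-\sigma$, all of real part $\le 0$ by the definition of $\sigma(G)$. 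It therefore suffices to prove $|\eta_p|\le C|\eta_2|$ uniformly in $p$, which translates back to $|\xi_p|\le Cp^\sigma|\xi_2|$.

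To this end I would introduce the vector $T_p\in\mathbb{C}^N$ of weighted moments,
\begin{equation*}
T_p^{(i)} \;:=\; p^{-(\alpha_i+\sigma+1)}\sum_{q=2}^{p-1} q^{\alpha_i+\sigma}\eta_q, \qquad 1\le i\le N.
\end{equation*}
A direct computation yields two identities. First, interchanging sums gives $\eta_p = \sum_j C_j T_p^{(j)}$. Second, isolating the $q=p$ contribution in the moment sum produces the vector recursion $T_{p+1}=\tilde M_{p+1}T_p$ with
\begin{equation*}
\tilde M_P^{(i,j)} \;=\; \left(1-\tfrac{1}{P}\right)^{\alpha_i+\sigma+1}\delta_{ij} \;+\; \frac{C_j}{P}\left(1-\tfrac{1}{P}\right)^{\alpha_i+\sigma}.
\end{equation*}
Under the identification $\alpha_i\mapsto\alpha_i+\sigma$, $\sigma_k\mapsto\sigma_k-\sigma$ this matrix coincides with the matrix $M_P$ of Lemma~\ref{lem24.2.M} up to an entrywise error of order $O(1/P^2)$, the only discrepancy being the off-diagonal exponent $(1-1/P)^{\alpha_i+\sigma}$ in place of $(1-1/P)^{\alpha_j+\sigma+1}$, multiplied by the pre-existing factor $1/P$.

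Because the leading $1/P$-part of $\tilde M_P$ agrees with that of $M_P$, the decomposition $\tilde M_P = S(I + D/P + O(1/P^2))S^{-1}$ used in the proof of Lemma~\ref{lem24.2.M} still holds, with $D$ diagonal of entries $\sigma_k-\sigma$, all of nonpositive real part. Submultiplicativity therefore yields $\|\tilde M_p\tilde M_{p-1}\cdots\tilde M_4\|_2\le K$ for a constant $K=K(G)$. Combined with the trivial bound $\|T_3\|_2\le C_0|\eta_2|$, this gives $\|T_p\|_2\le KC_0|\eta_2|$ for all $p\ge 3$, and the identity $\eta_p=\sum_j C_j T_p^{(j)}$ then yields $|\eta_p|\le C|\eta_2|$, completing the reduction.

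The main obstacle I foresee is that $\tilde M_P$ does not literally equal the matrix of Lemma~\ref{lem24.2.M}, only agreeing at leading order. This should be inessential, since the lemma's proof uses only the spectral structure of the $1/P$-part and absorbs $O(1/P^2)$ terms into the convergent product $\prod_P(1+C/P^2)$; nevertheless a cleaner presentation may prefer to redefine $T_p^{(i)}$ with a slightly different normalization (e.g.\ using $(p-1)$ in place of $p$, or summing up to $p$ rather than $p-1$) so that $\tilde M_P$ matches $M_P$ on the nose, at the cost of handling a few boundary terms separately.
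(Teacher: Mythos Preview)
Your proposal is correct and follows essentially the same route as the paper: reduce to characteristic exponent zero, introduce the weighted moments of $\xi_q$, derive a vector recursion with matrix of the form in Lemma~\ref{lem24.2.M}, and invoke that lemma. The only cosmetic differences are that the paper shifts by a complex root $\sigma_{i_0}$ with $\mathrm{Re}(\sigma_{i_0})=\sigma(G)$ rather than by the real number $\sigma(G)$, and---exactly as you anticipate in your final paragraph---defines the moments as $B_p^{(k)}=\sum_{q=2}^{p}\xi_q q^{\alpha_k}$ (summing through $p$, not $p-1$), which makes the resulting matrix coincide with $M_p$ of Lemma~\ref{lem24.2.M} on the nose and avoids the $O(1/P^2)$ discrepancy you had to argue away.
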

\begin{proof}
We begin with a simple observation. Suppose
$Re(\sigma_{i_0}) = \sigma(G)$ and consider
$$
\tilde G(\gamma) = \sum_{i=1}^N C_i
 \gamma^{\alpha_i -\sigma_{i_0} }.
$$
Then $\tilde G(\gamma)$ also satisfies the assumption
\eqref{eq23.11.G2} with $\tilde \alpha_i=\alpha_i-\sigma_{i_0}$.
Also $\sigma(\tilde G)=0$,
$\tilde \xi_p = p^{-\sigma_{i_0}} \xi_p$, where $\tilde \xi_p$ is
generated by $\tilde G$. With this simple observation,
it suffices for us to prove the theorem assuming
$\sigma(G)=0$.
Assume this is the case and define the moments of $\xi_p$ by:
$$
B^{(k)}_p = \sum_{q=2}^p \xi_q q^{\alpha_k},
$$
where $\alpha_k$ is defined in the definition of $G$
(see \eqref{eq23.11.G1}). Then obviously we have
\begin{align} \label{eq24.2.52}
\xi_p = \sum_{k=1}^N \frac {C_k} {p^{\alpha_k+1}} B^{(k)}_{p-1}.
\end{align}
The recurrent formula for $B^{(k)}_p$ follows easily:
$$
B^{(k)}_p = B^{(k)}_{p-1} + p^{\alpha_k}\sum_{j=1}^N
\frac {C_j}{p^{\alpha_j+1}} B^{(j)}_{p-1}.
$$
Now if our bound on $\xi_p$ \eqref{eq23.36.xi} is correct, then
heuristically $B^{(k)}_p$ grows as $p^{Re(\alpha_k)+1}$.
This motivates us to define the scaled variables
$$
\tilde B^{(k)}_p = {p^{-\alpha_k-1}}{B^{(k)}_p}.
$$
For $\tilde B^{(k)}_p$ we have the recurrent relation:
$$
\tilde B^{(k)}_p = \sum_{j=1}^N M_p^{(k,j)}
\tilde B^{(j)}_{p-1},
$$
where the matrix $M_p^{(k,j)}$ is given by:
$$
M_p^{(k,j)} =
\left(1 - \frac 1 p \right)^{\alpha_k+1} \delta_{kj}
+C_j \cdot \left(1 -\frac 1 p \right)^{\alpha_j+1}
\cdot \frac 1 {p}.
$$
Note that the matrix $M_p$ is the same as in lemma
\ref{lem24.2.M}. Now denote the vector
$\tilde B_p =(\tilde B^{(1)}_p, \cdots,
\tilde B^{(N)}_p)^T$. Then we have
\begin{align*}
\norm{\tilde B_p}_2 &= \Bigl\| \left( M_p
M_{p-1}\cdots M_3M_2 \right)
\tilde B_1 \Bigr\|_2 \\
& \le  \Bigl\|  M_p
M_{p-1}\cdots M_3M_2
\Bigr\|_2\norm{\tilde B_1}_2.
\end{align*}
By lemma \ref{lem24.2.M} we have for some constant
$C$ depending only on $G$ such that
$$
\Bigl\|  M_p M_{p-1}\cdots M_3M_2 \Bigr\|_2 \le C.
$$
This immediately implies that $\tilde B_p$ is uniformly bounded
for all $p>1$. The desired bound on $\xi_p$ then follows by using
\eqref{eq24.2.52}. Our theorem is proved.
\end{proof}
The next lemma can be regarded as an inhomogeneous version of
theorem \ref{24.22.thm}. It states that the solution to
the linear recurrent system generated by $G$ is stable under sufficiently
small perturbations.
\begin{lemma} \label{lem_07_13_13}
 Assume $G$ satisfies \eqref{eq23.11.G1} and
\eqref{eq23.11.G2}, with the characteristic exponent
$\sigma=\sigma(G)$ defined in \eqref{eq23.11.G3}.
Consider the linear recurrent system defined by:
$$
\xi_p = h_p + \frac 1 p \sum_{q=2}^{p-1} G(q/p) \xi_q,\quad p\ge 3,
$$
where $\xi_2$ is a parameter, and assume that for some positive
constant $C_1$ and $\epsilon$, the sequence $h_p$ satisfies
$$
\abs{ h_p } \le {C_1}\cdot  {p^{-\epsilon+\sigma}}, \quad\forall\ p\ge 2.
$$
Then there exists a constant
$K$ independent of $p$, such that
\begin{align}   \label{eq24.22.xi}
\abs{\xi_p} \le KC_1\cdot p^{\sigma}
\cdot \abs{\xi_2}, \quad \forall\ p\ge 2.
\end{align}
\end{lemma}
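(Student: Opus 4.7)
My plan is to adapt the proof of Theorem \ref{24.22.thm} by running the moment machinery with the forcing $h_p$ incorporated through a discrete Duhamel formula. The first step is the same rescaling trick used in that theorem: setting $s=\sigma(G)$, replace $G$ by $\tilde G(\gamma)=\sum_i C_i \gamma^{\alpha_i-s}$ and let $\tilde\xi_p=p^{-s}\xi_p$, $\tilde h_p=p^{-s}h_p$. The rescaled pair satisfies the same recurrence with $\tilde G$ in place of $G$ and with inhomogeneity $\tilde h_p$ obeying $|\tilde h_p|\le C_1 p^{-\epsilon}$, while $\sigma(\tilde G)=0$. So it suffices to prove the lemma under the assumption $\sigma=0$.

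Next, define the moments $B_p^{(k)}=\sum_{q=2}^p \xi_q q^{\alpha_k}$ and their rescaled versions $\tilde B_p^{(k)}=p^{-\alpha_k-1}B_p^{(k)}$ as in Theorem \ref{24.22.thm}. Combining $\xi_p=h_p+\sum_k C_k p^{-\alpha_k-1}B_{p-1}^{(k)}$ with $B_p^{(k)}=B_{p-1}^{(k)}+p^{\alpha_k}\xi_p$ and repeating the algebra from that theorem, the only change is the appearance of an additive forcing term, yielding
\begin{equation*}
\tilde B_p = M_p\tilde B_{p-1} + \frac{h_p}{p}\mathbf{1}, \qquad \mathbf{1}=(1,\dots,1)^T,
\end{equation*}
where $M_p$ is \emph{exactly} the matrix from Lemma \ref{lem24.2.M}. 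Iterating from $\tilde B_2=(\xi_2/2)\mathbf{1}$ gives the discrete Duhamel representation
\begin{equation*}
\tilde B_p = (M_p M_{p-1} \cdots M_3)\,\tilde B_2 \; + \; \sum_{q=3}^{p} (M_p M_{p-1} \cdots M_{q+1})\,\frac{h_q}{q}\,\mathbf{1},
\end{equation*}
with the convention that the empty product at $q=p$ is the identity.

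By Lemma \ref{lem24.2.M} the spectral norm of every partial product $M_p\cdots M_m$ is at most a constant $K$ depending only on $G$. Taking $\|\cdot\|_2$ and using $|h_q|\le C_1 q^{-\epsilon}$ with $\epsilon>0$, the Duhamel sum converges, giving $\|\tilde B_p\|_2 \le K\sqrt{N}\bigl(|\xi_2|/2 + C_1\sum_{q\ge 3} q^{-1-\epsilon}\bigr)$, a bound uniform in $p$. Substituting into $\xi_p=h_p+\sum_k C_k(1-1/p)^{\alpha_k+1}\tilde B_{p-1}^{(k)}$ gives the uniform bound on $\tilde\xi_p$; undoing the rescaling multiplies by $p^\sigma$ and produces \eqref{eq24.22.xi}. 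There is no serious obstacle here beyond bookkeeping: the forcing slots into the matrix recurrence as a clean scalar multiple of $\mathbf{1}$, and the assumption $\epsilon>0$ is precisely what is needed to control the Duhamel series against the uniform matrix-product bound of Lemma \ref{lem24.2.M}. I note in passing that the product $KC_1 p^\sigma|\xi_2|$ on the right-hand side of \eqref{eq24.22.xi} appears to be a typographical slip for $K(C_1+|\xi_2|)p^\sigma$, since the case $h_p\equiv 0$ must recover Theorem \ref{24.22.thm} while the case $\xi_2=0$ must still let $h_p$ drive $\xi_p$.
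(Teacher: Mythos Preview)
Your proposal is correct and follows essentially the same route as the paper: the rescaling to $\sigma=0$, the moment vectors $\tilde B_p$, the forced matrix recurrence $\tilde B_p=M_p\tilde B_{p-1}+p^{-1}h_p\mathbf{1}$, the discrete Duhamel iteration, and the application of Lemma~\ref{lem24.2.M} to bound each partial product are all identical to the paper's argument. Your remark that the right-hand side of \eqref{eq24.22.xi} should read $K(C_1+|\xi_2|)p^\sigma$ rather than $KC_1p^\sigma|\xi_2|$ is well taken; the paper's own bound $\|\tilde B_p\|_2\le \tilde K\sum \sqrt N C_1 p_1^{-1-\epsilon}+\tilde K C_2$ confirms the additive dependence.
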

\begin{proof}
Without loss of generality assume $\sigma(G)=0$ (see the beginning
of the proof of theorem \ref{24.22.thm}). Define $B_p^{(k)}$,
$\tilde B_p^{(k)}$, $\tilde B_p$ and $M_p^{(k,j)}$ as in the proof of theorem
\ref{24.22.thm}. Clearly then
\begin{align}   \label{eq.24.23.25}
\xi_p = h_p + \sum_{k=1}^N C_k \left( 1 - \frac 1 p \right)^{\alpha_k+1}
 \tilde B_{p-1}^{(k)}.
\end{align}
Denote $g_p = p^{-1}h_p (1,\cdots,1)^T$. Then for $\tilde B_p$ we have
\begin{align*}
\tilde B_p &= M_p \tilde B_{p-1} + g_p \\
& = M_p(M_{p-1} \tilde B_{p-2}+ g_{p-1}) + g_p = \cdots \\
& = \sum_{p_1=3}^p \Bigl( M_{p} M_{p-1} \cdots M_{p_1+1} \Bigr) g_{p_1}
+ \Bigl( M_p M_{p-1} \cdots M_3 \Bigr) \tilde B_2.
\end{align*}
By our assumption on $h_{p_1}$, we have
$$
\norm{ g_{p_1} }_2 \le \frac {\sqrt N C_1}{p_1^{1+\epsilon}},\quad
\forall\ 3\le p_1\le p.
$$
By lemma \ref{lem24.2.M}, we have for some constant $\tilde K>0$,
$$
\norm{ \Bigl( M_{p} M_{p-1} \cdots M_{p_1+1} \Bigr) }_2 \le \tilde K, \forall \, p_1,p\ge 2.
$$
It follows easily that
$$
\norm{\tilde B_p}_2 \le \tilde K \sum_{p_1=3}^p \frac {\sqrt NC_1} {p_1^{1+\epsilon}}
+\tilde K C_2 \le KC_1,
$$
where $C_2$ is a constant independent of $p$. Now use
\eqref{eq.24.23.25} to conclude the proof of the lemma. The lemma is proved.
\end{proof}
\mysection{Estimate of the nonlinear term and proof of the main theorem}
In this section we shall estimate the nonlinear term $N_p$. We have
the following lemma.
\begin{lemma} \label{lem_07_13_20}
Assume $0<\sigma<1$.
Let $(\xi_q)_{q\ge 2}$ be a sequence of numbers such that
\begin{align*}
\abs{ \prod_{p_1<q<N_0} \left( 1 + \frac {\xi_q} {q^3} \right)^{p_1} }
\le C_1, \quad\forall\ 1\le p_1< N_0.
\end{align*}
and
\begin{align*}
\abs{\xi_q} \le C_2 {q^{-\sigma}}, \quad \forall\  N_0<q<p,
\end{align*}
where $N_0\ge \max\{ C_2 ^{\frac 1 {1+\sigma}},
(4C_2)^{\frac 1 {3+\sigma}} \}$.
Let $p\ge 2N_0$ and $N_p$ be defined by
\begin{align*}
N_p = \sum_{p_1+ p_2 = p} f(\gamma)
\left(\prod_{p_1 <q <p} ( 1+ \frac {\xi_q} {q^3})^{p_1}-1 \right)
\cdot
\left( \prod_{p_2<q<p} ( 1+ \frac {\xi_q} {q^3} )^{p_2}-1 \right).
\end{align*}
Then we have
\begin{align*}
\abs{N_p} \le
\norm{\tilde f}_\infty \left(
\frac {(1+C_1+C_2) 2N_0^2 C_2 } {(p-N_0)^{2+\sigma}}
+ \frac {4C_2^2 C_{\sigma}} {p^{1+2\sigma}} \right),
\end{align*}
where
$$
C_{\sigma} = \int_0^1 \gamma^{-\sigma} (1-\gamma)^{-\sigma} d\gamma.
$$
\end{lemma}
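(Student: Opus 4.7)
The plan is to decompose the sum defining $N_p$ into an \emph{interior} region, where both $p_1,p_2\ge N_0$, and two symmetric \emph{boundary} regions where $p_1<N_0$ (or by symmetry $p_2<N_0$); the condition $p\ge 2N_0$ guarantees these regions do not overlap. In each region I separately estimate
\[
A_k \;:=\; \prod_{k<q<p}\Bigl(1+\frac{\xi_q}{q^3}\Bigr)^{k}-1
\]
for $k=p_1$ or $k=p_2$, and collect the bounds via $\abs{N_p}\le\norm{\tilde f}_\infty \sum_{p_1+p_2=p}\abs{A_{p_1}}\abs{A_{p_2}}$.

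For the interior I would apply $\abs{\log(1+u)}\le 2\abs{u}$ and $\abs{e^x-1}\le 2\abs{x}$, both legitimate because $N_0\ge (4C_2)^{1/(3+\sigma)}$ forces the relevant arguments to be small. This reduces the task to bounding $\abs{A_{p_k}}\le 2p_k\sum_{q=p_k+1}^{p-1}\abs{\xi_q}/q^3$. The key sharp estimate is to factor $1/q^{3+\sigma}=q^{-\sigma}\cdot q^{-3}$, peel off $p_k^{-\sigma}$, and integrate the remainder:
\[
\sum_{q=p_k+1}^{p-1}\frac{1}{q^{3+\sigma}}\;\le\;\frac{1}{p_k^{\sigma}}\int_{p_k}^{p}\frac{dx}{x^3}\;\le\;\frac{p-p_k}{p\,p_k^{2+\sigma}}.
\]
This produces $\abs{A_{p_k}}\le 2C_2(p-p_k)/(p\,p_k^{1+\sigma})$, which crucially retains the factor $(p-p_k)/p$. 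Writing $\gamma=p_1/p$, the product becomes the symmetric expression $\abs{A_{p_1}}\abs{A_{p_2}}\le 4C_2^2/[\,p^{2+2\sigma}\gamma^{\sigma}(1-\gamma)^{\sigma}\,]$, and summing over $N_0\le p_1\le p-N_0$ recognizes the Riemann sum $\tfrac{1}{p}\sum \gamma^{-\sigma}(1-\gamma)^{-\sigma}\to C_\sigma$, producing exactly the second term $\norm{\tilde f}_\infty\cdot 4C_2^2 C_\sigma/p^{1+2\sigma}$.

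For the boundary ($p_1<N_0$, so $p_2>p-N_0$) I would split
\[
\prod_{p_1<q<p}\Bigl(1+\frac{\xi_q}{q^3}\Bigr)^{p_1}
\;=\;\prod_{p_1<q<N_0}\Bigl(1+\frac{\xi_q}{q^3}\Bigr)^{p_1}\cdot\prod_{N_0\le q<p}\Bigl(1+\frac{\xi_q}{q^3}\Bigr)^{p_1}.
\]
The first product is bounded by $C_1$ via hypothesis (3) of Assumption \ref{assumption_07_400}, and the second by a constant depending only on $C_2$ because $p_1<N_0$ and $N_0\ge C_2^{1/(1+\sigma)}$ makes $\sum_{q\ge N_0}\abs{\xi_q}/q^3$ small. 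Meanwhile $A_{p_2}$ involves a product over only $p_1-1<N_0$ indices, all lying above $p-N_0$, so the same exponential estimate gives $\abs{A_{p_2}}\le 2p_2(p_1-1)C_2/(p-N_0)^{3+\sigma}$. Multiplying, summing over $p_1=1,\dots,N_0-1$, doubling for the symmetric boundary $p_2<N_0$, and bookkeeping the constants yields the first term $\norm{\tilde f}_\infty(1+C_1+C_2)\cdot 2N_0^2 C_2/(p-N_0)^{2+\sigma}$.

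The main obstacle is the sharp interior estimate. The naive bound $\abs{A_{p_k}}\le C_2/p_k^{1+\sigma}$ (obtained by letting $p\to\infty$ in the tail sum) would lead to the \emph{divergent} Beta-type integral $\int_0^1\gamma^{-1-\sigma}(1-\gamma)^{-1-\sigma}d\gamma$. Retaining the prefactor $(p-p_k)/p$ through the refined split $p_k^{-\sigma}\int_{p_k}^p x^{-3}dx$ is essential, and the hypothesis $0<\sigma<1$ is precisely what renders $C_\sigma$ finite and closes the estimate.
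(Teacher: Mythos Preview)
Your proposal is correct and follows essentially the same route as the paper: split into boundary ($p_1<N_0$) and interior ($p_1,p_2\ge N_0$) regions, bound each factor $A_{p_k}$ via $|e^x-1|\le 2|x|$ and the tail sum of $|\xi_q|/q^3$, and recognize the interior sum as a Riemann sum for the Beta integral $C_\sigma$. The only cosmetic difference is how the crucial extra smallness is extracted in the interior: the paper integrates $x^{-3-\sigma}$ directly to obtain the factor $1-(r/p)^{2+\sigma}$ (then bounds it by $(2+\sigma)(1-r/p)$), whereas you peel off $q^{-\sigma}\le p_k^{-\sigma}$ first and integrate $x^{-3}$ to get $(p-p_k)/p$; both routes collapse to the identical summand $4C_2^2/\bigl(p^{2+2\sigma}\gamma^{\sigma}(1-\gamma)^{\sigma}\bigr)$ and the same boundary bookkeeping.
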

\begin{proof}
First since we have $N_0\ge \max\{ \left( \frac {2C_2} {2+\sigma} \right)^{\frac 1 {1+\sigma}},
(4C_2)^{\frac 1 {3+\sigma}} \}$, therefore for $r\ge N_0$,
\begin{align*}
\abs{ \prod_{r<q<p} \left(1+ \frac {\xi_q} {q^3} \right)^r
 -1} \le \frac {2C_2} {2+\sigma} \cdot \frac 1 {r^{1+\sigma}}
 \cdot \left( 1 - \left( \frac r p \right)^{2+\sigma} \right).
\end{align*}
For $1\le r<N_0$, clearly
\begin{align*}
\abs{ \prod_{r<q<p} \left(1+ \frac {\xi_q} {q^3} \right)^r
 -1} \le 1 + C_1 + \frac {2C_1 C_2} {(2+\sigma)N_0^{1+\sigma}}.
\end{align*}
For $p-N_0<p_2<p$, we get
\begin{align*}
\abs{ \prod_{p_2<q<p} ( 1+ \frac {\xi_q} {q^3} )^{p_2}-1 }
\le \frac {2N_0 C_2} {p_2^{2+\sigma}}.
\end{align*}
Now we have
\begin{align*}
\abs{N_p} \le &\norm{\tilde f}_\infty \left(
\sum_{p_1<N_0} \abs{ \prod_{p_1 <q <p} ( 1+ \frac {\xi_q} {q^3})^{p_1}-1}
\cdot \abs{ \prod_{p_2<q<p} ( 1+ \frac {\xi_q} {q^3} )^{p_2}-1 } + \right.\\
& \left.
+ \sum_{N_0\le p_1 \le \frac p 2} \abs{ \prod_{p_1 <q <p} ( 1+ \frac {\xi_q} {q^3})^{p_1}-1}
\cdot \abs{ \prod_{p_2<q<p} ( 1+ \frac {\xi_q} {q^3} )^{p_2}-1 }
\right) \\
\le & \norm{\tilde f}_\infty \left(
\sum_{p_1<N_0} \left( 1 + C_1 + \frac {2C_1C_2} {(2+\sigma)N_0^{1+\sigma}} \right)
\cdot \frac {2N_0C_2} {p_2^{2+\sigma}} +\right.\\
& \left. +\sum_{N_0\le p_1 <\frac p2} \left( \frac {2C_2} {2+\sigma} \right)^2
\frac 1 {p_1^{1+\sigma} p_2^{1+\sigma}} \left( 1 - \left( \frac {p_1 } p \right)^{2+\sigma} \right)
\cdot \left( 1 - \left( \frac {p_2} p \right)^{2+\sigma} \right) \right)\\
& \le \norm{\tilde f}_\infty \left(
\frac {(1+C_1+C_2) 2N_0^2 C_2 } {(p-N_0)^{2+\sigma}}
+ \frac {4C_2^2 C_{\sigma}} {p^{1+2\sigma}} \right).
\end{align*}
\end{proof}
\begin{proof}[Proof of the main theorem]
We begin by observing that $\alpha_f = 1-\sigma(G)$. By assumption \ref{07assump_main} we have
$0<-\sigma(G)<1$. Our inductive assumption is that for $r<p$, we have
$$
\xi_r = h_r + \frac 1 r \sum_{q=2}^{r-1} G(q/r) \xi_q,
$$
and for some positive $\epsilon$, $0<\epsilon<\frac 1 2\min\{1+\sigma(G), -\sigma(G) \}$,
$$
\abs{ h_r } \le C_4 r^{\sigma(G)-\epsilon}.
$$
We shall assume that the inductive assumption is justified for some $p_0$ sufficiently large.
Now assume $p\ge p_0$. Lemma \ref{lem_07_13_13} gives us
$$
\abs{\xi_r } \le K \cdot C_4 r^{\sigma(G)}, \quad\ \forall\ r\le p-1.
$$
At step $p$, define
$$
 \hat \xi_p = -p^2 \left( \left( 1 + \frac {\xi_p} {p^3} \right)^{-p} -1 \right).
$$
Then we have (see section 2)
\begin{align*}
\hat \xi_p = p(R_p^{(1)} -R_{p-1}^{(1)}) - \left( p (Q_p-Q_{p-1}) - \frac p{p-1} \hat \xi_{p-1} \right)
- p (N_p -N_{p-1}).
\end{align*}
By lemma \ref{Lem_07_main_11}
$$
\abs{ p(Q_p -Q_{p-1}) - \frac p {p-1} \hat \xi_{p-1}
+ \frac 1 p \sum_{q=2}^{p-1} G(q/p) \xi_q } \le \frac {C_{\epsilon_1}} {p^{1-\epsilon_1}},
$$
where $C_{\epsilon_1}$ depends only on $\epsilon$, $\epsilon_1$, $\alpha$, $C_4$, $K$ and the function $f$. The
parameter $0<\epsilon_1<1$ can be taken arbitrarily small (but $C_{\epsilon_1}$ will be large). Take
$\epsilon_1<1+\sigma(G)-2\epsilon$, then by lemma \ref{lem_07_13_20} we get
\begin{align*}
& \abs{ \hat \xi_p - \frac 1 p\sum_{q=2}^{p-1} G(q/p) \xi_q } \\
\le & \frac {C_{\epsilon_1}} {p^{1-\epsilon_1}} + p \abs{ N_p ( \xi_2,\xi_3,\cdots, \xi_{p-1})}
+ p \abs{ N_p ( \xi_2,\xi_3,\cdots, \xi_{p-1})} + \frac B p \\
\le & \frac {C_{\epsilon_1}} {p^{1-\epsilon_1}} + {D_2} {p^{2\sigma(G)}} + \frac B p
\le C_5 p^{\sigma(G) -2\epsilon},
\end{align*}
where $D_2$, $B$ and $C_5$ are constants. Now use lemma \ref{lem_07_13_13} again to have that
$$
\abs{ \hat \xi_p } \le K \cdot C_5 p^{\sigma(G)}.
$$
Then by lemma \ref{lem_07_13_40}, we have
\begin{align*}
 \abs{h_p} \le  & \abs{\xi_p-\hat\xi_p}+ \abs{ \hat\xi_p - \frac 1 p \sum_{q=2}^{p-1} G(q/p) \xi_q } \\
\le & \frac {C_6} {p^2} + C_5 p^{\sigma(G) -2\epsilon} \le C_4 p^{\sigma(G)-\epsilon}.
\end{align*}
where the last inequality follows if we take $p_0$ to be sufficiently large. We have proved the inductive hypothesis
for all $p$. Now the desired uniform bounds on $\xi_p$ follows again from the stability lemma \ref{lem_07_13_13}.
The main theorem is proved.
\end{proof}
\begin{remark} \label{07:rem:technical}
A technical assumption is used in the proof of the main theorem. Namely, we need to justify
our inductive assumption up to $p=p_0$. This step is mainly done with the help of numerics.
In the special case $f(\gamma)=6\gamma^2-10\gamma+4$, a good estimate of $p_0$ is about
$p_0\approx 100$. This is enough for our purposes.
\end{remark}
\mysection{appendix}
This appendix is devoted to the elementary estimates needed in previous sections. We note
that because of the technical assumption in section 5, we need to get explicit control of constants.
For this reason we gather all the needed explicit estimates in this appendix.
\begin{lemma}
The following easy elementary inequalities are used:
\begin{itemize}
\item Suppose $x>-1$, then
      $$\log(1+x) \le x.$$
\item Suppose $M>N\ge 1$, then for any ${\alpha>0}$
      $$ \sum_{q=N+1}^M \frac 1 {q^{\alpha+1}} \le \frac 1 {\alpha} \left( \frac 1 {N^\alpha} -
      \frac 1 {M^\alpha} \right).$$
\item If $0\le x\le 1$, then
      $$ e^{\frac x 2} \le 1+x. $$
\item If $-\frac 1 4 \le x \le 0 $, then
      $$ e^{2x} \le 1+x.$$
      equivalently we have
      $$ \log(1+x) \ge 2x.$$
\item If $ \abs x \le \frac 1 2$, then
     $$ e^x \le 1 +x + x^2.$$
\item If $ \abs x \le \frac 1 4$, then
     $$ \log(1+x) \ge x -x^2.$$
\end{itemize}
\end{lemma}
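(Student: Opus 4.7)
The six inequalities split into three natural groups, and all follow from standard calculus; the plan is to handle each group with a uniform approach. The first, $\log(1+x)\le x$, is the concavity of $\log$: the auxiliary function $\varphi(x)=x-\log(1+x)$ has derivative $x/(1+x)$, which vanishes only at $x=0$ and has the correct signs on either side, so $\varphi$ attains its global minimum $\varphi(0)=0$ on $(-1,\infty)$. The second inequality is the standard integral comparison: since $t\mapsto t^{-\alpha-1}$ is decreasing on $[1,\infty)$, each term satisfies $q^{-\alpha-1}\le \int_{q-1}^{q} t^{-\alpha-1}\,dt$, and summing over $q=N+1,\ldots,M$ yields $\sum_{q=N+1}^M q^{-\alpha-1}\le \int_N^M t^{-\alpha-1}\,dt=\alpha^{-1}(N^{-\alpha}-M^{-\alpha})$.

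For the two exponential comparisons (the $e^{x/2}$ and $e^{2x}$ bounds), the plan is to analyze the sign of a single derivative on a closed interval. For $0\le x\le 1$, set $g(x)=1+x-e^{x/2}$; then $g(0)=0$ and $g'(x)=1-\frac12 e^{x/2}$ is positive on $[0,2\log 2)$, which contains $[0,1]$, so $g$ is increasing there and $g\ge g(0)=0$. For $-1/4\le x\le 0$, set $h(x)=1+x-e^{2x}$; then $h(0)=0$ and $h'(x)=1-2e^{2x}$ is negative on $(-\frac12\log 2,\infty)$, which contains $[-1/4,0]$ since $\frac12\log 2>1/4$; hence $h$ is decreasing on that interval and $h(x)\ge h(0)=0$.

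The remaining two bounds on $e^x$ and $\log(1+x)$ are quantitative Taylor estimates proved via explicit geometric tail bounds. For the fifth inequality, expand $e^x-1-x-x^2=-x^2/2+\sum_{k\ge 3}x^k/k!$; the tail is dominated by $\sum_{k\ge 3}|x|^k/k!\le |x|^3/(6(1-|x|))$, and for $|x|\le 1/2$ this is at most $|x|^3/3\le x^2/2$, so the whole expression is non-positive. For the sixth, expand $\log(1+x)-(x-x^2)=x^2/2+\sum_{k\ge 3}(-1)^{k+1}x^k/k$; the tail is at most $\sum_{k\ge 3}|x|^k/k\le |x|^3/(3(1-|x|))$, which for $|x|\le 1/4$ is bounded by $4|x|^3/9\le x^2/2$, so the expression is non-negative. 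The only delicate point is choosing the geometric tail bound so that the comparison of $x^2/2$ against the tail holds \emph{uniformly} on the stated interval and not merely at the endpoints; since the proofs are otherwise routine monotonicity arguments, this constant-chasing is the main (and only) obstacle.
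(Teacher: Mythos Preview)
Your proofs of all six inequalities are correct. The paper itself gives no proof for this lemma at all---it is simply stated as a list of ``easy elementary inequalities'' and left to the reader---so your argument supplies precisely the routine calculus verifications the paper omits, and there is nothing to compare against.
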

\begin{lemma} \label{lem_07_13_40}
Let $D_1>0$, $p_0$ is an integer and $p_0\ge \max\{ 2\sqrt{D_1}, 7\}$.
Suppose $\abs x \le D_1$, Then
$$
\abs{ p^3\left( \left(1+\frac x {p^2}\right)^{-\frac 1 p} -1 \right) +x} \le
\frac {5x^2} {4p^2} \le \frac {5D_1^2} {4p^2}, \quad \forall\ p\ge p_0.
$$
\end{lemma}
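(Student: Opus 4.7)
The plan is to prove this by a two-step Taylor analysis that converts the expression into the form $(\text{linear error from } \log) + (\text{quadratic error from } \exp)$, using the explicit elementary inequalities listed in the first lemma of the appendix so that all constants are visible.

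Set $y = x/p^{2}$ and $t = -\frac{1}{p}\log(1+y)$, so that $(1+x/p^{2})^{-1/p}=e^{t}$. The hypothesis $p\ge 2\sqrt{D_1}$ gives $|y|\le D_1/p^{2}\le 1/4$, which is exactly the range where the appendix inequalities $\log(1+y)\le y$ and $\log(1+y)\ge y-y^{2}$ both apply. A quick estimate then shows $|\log(1+y)|\le \frac{5}{4}|y|$, hence $|t|\le 5|y|/(4p)\le 1/2$ (using $p\ge 7$), which puts us in the range where $1+t\le e^{t}\le 1+t+t^{2}$ is valid.

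Next I would decompose
\begin{align*}
p^{3}\bigl((1+y)^{-1/p}-1\bigr)+x
&= \bigl(-p^{2}\log(1+y)+x\bigr) + p^{3}(e^{t}-1-t).
\end{align*}
The two sandwich inequalities on $\log(1+y)$ give
\begin{align*}
0\le -p^{2}\log(1+y)+x \le p^{2}y^{2} = \frac{x^{2}}{p^{2}},
\end{align*}
regardless of the sign of $x$ (the argument is the same for $y>0$ and $y<0$). For the exponential remainder, $0\le e^{t}-1-t\le t^{2}$, so
\begin{align*}
0\le p^{3}(e^{t}-1-t)\le p^{3}t^{2} = p\log^{2}(1+y) \le \frac{25\,x^{2}}{16\,p^{3}}.
\end{align*}

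Adding the two contributions yields
\begin{align*}
\bigl|p^{3}((1+y)^{-1/p}-1)+x\bigr| \le \frac{x^{2}}{p^{2}} + \frac{25\,x^{2}}{16\,p^{3}}.
\end{align*}
Finally, the condition $p_{0}\ge 7$ is used precisely to absorb the cubic tail: $\frac{25}{16p}\le \frac{25}{112}<\frac{1}{4}$, so the second term is at most $\frac{x^{2}}{4p^{2}}$, giving the stated bound $\frac{5x^{2}}{4p^{2}}$.

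There is no real obstacle here — the content is pure bookkeeping — but the one place to be careful is matching the ranges of validity: one must check that $|y|\le 1/4$ so the lower bound on $\log(1+y)$ applies, and that $|t|\le 1/2$ so the quadratic upper bound on $e^{t}$ applies. Both reduce to the two explicit lower bounds $p\ge 2\sqrt{D_1}$ and $p\ge 7$ built into the hypothesis $p_{0}\ge\max\{2\sqrt{D_1},7\}$.
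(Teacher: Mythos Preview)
Your proof is correct and takes essentially the same approach as the paper: both use the appendix inequalities $\log(1+y)\le y$, $\log(1+y)\ge y-y^{2}$ (valid for $|y|\le 1/4$), and $e^{t}\le 1+t+t^{2}$ (valid for $|t|\le 1/2$), and both arrive at the same final tally $\frac{x^{2}}{p^{2}}+O\!\left(\frac{x^{2}}{p^{3}}\right)$, with $p\ge 7$ absorbing the tail into $\frac{5x^{2}}{4p^{2}}$. The only cosmetic difference is organizational: you explicitly split the expression as the sum of the two nonnegative pieces $p^{2}(y-\log(1+y))$ and $p^{3}(e^{t}-1-t)$, whereas the paper bounds the full expression above and below directly; the content is identical.
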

\begin{proof}
One side is trivial:
\begin{align*}
& p^3\left( \left(1+\frac x {p^2}\right)^{-\frac 1 p} -1 \right) +x \\
\ge & p^3 \left( \exp\left( -\frac x {p^3} \right) -1\right) +x
\ge 0.
\end{align*}
For the other side,
\begin{align*}
& p^3\left( \left(1+\frac x {p^2}\right)^{-\frac 1 p} -1 \right) +x \\
\le & p^3 \left( \exp\left( -\frac x{p^3} + \frac {x^2} {p^5} \right)-1 \right) +x \\
\le & p^3 \left( - \frac x{p^3} + \frac {x^2} {p^5} +
  \left( \frac x{p^3} -\frac {x^2} {p^5} \right)^2 \right) +x \\
\le & \frac {5x^2} {4p^2}.
\end{align*}
\end{proof}
\begin{lemma}
Assume that $\abs{\xi_q} \le A$, for any $q\ge N_0$, where $N_0\ge \max\{1, A\}$. Then we have
$$
\abs{ \prod_{p_1<q<p} \left(1+ \frac{\xi_q} {q^3} \right)^{p_1} -1} \le \frac A {p_1},
$$
for any $p_1 \ge N_0$.
\end{lemma}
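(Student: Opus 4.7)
The plan is to reduce the claim to two facts: a tail bound for $\sum 1/q^{3}$ and the elementary inequality $e^{x/2}\le 1+x$ on $[0,1]$. Set $b_q:=\xi_q/q^3$, so the quantity to estimate is $\prod_{p_1<q<p}(1+b_q)^{p_1}-1$. Note that for $q>p_1\ge N_0\ge A$ we automatically have $|b_q|\le A/q^3\le 1/q^2\le 1$, so all the expressions below make sense.

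The first step is to dominate by a product with positive entries. Expanding $\prod(1+b_q)^{p_1}=\prod\sum_{k=0}^{p_1}\binom{p_1}{k}b_q^{k}$ yields a sum of monomials in the $b_q$ whose coefficients are nonnegative (products of binomial coefficients). Subtracting the constant term and taking absolute values gives
\[
\left|\prod_{p_1<q<p}(1+b_q)^{p_1}-1\right|\le\prod_{p_1<q<p}(1+|b_q|)^{p_1}-1.
\]
Applying $1+x\le e^x$ termwise then bounds this by $\exp\!\bigl(p_1\sum|b_q|\bigr)-1$.

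The second step is to control the exponent. Using $|\xi_q|\le A$ and the elementary inequality $\sum_{q=p_1+1}^{p-1}1/q^{3}\le 1/(2p_1^{2})$ (the stated tail bound with $\alpha=2$), I get
\[
p_1\sum_{p_1<q<p}|b_q|\le p_1\cdot A\cdot\frac{1}{2p_1^{2}}=\frac{A}{2p_1}.
\]
Because $p_1\ge N_0\ge A$, the right-hand side is at most $1/2$, which is exactly what the next step needs.

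The final step is to convert the exponential back into a linear bound on the nose. The listed elementary inequality $e^{x/2}\le 1+x$ for $0\le x\le 1$ rewrites as $e^{y}\le 1+2y$ for $0\le y\le 1/2$. Setting $y=A/(2p_1)\le 1/2$ yields $\exp(A/(2p_1))-1\le A/p_1$, and combining with the two previous steps gives the claimed bound. The only delicate point is that the constants must align exactly: any weaker form of the tail bound (worse than $1/(2p_1^{2})$) or of the exponential inequality (worse than $e^{y}\le 1+2y$ on $[0,1/2]$) would only yield $CA/p_1$ with $C>1$; this is why the hypothesis $N_0\ge A$ is built in so that the factor $A/(2p_1)$ stays in the range where the sharp linear upper bound for $e^{y}-1$ holds.
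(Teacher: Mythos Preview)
Your proof is correct. The final two steps---the tail bound $\sum_{q>p_1}q^{-3}\le 1/(2p_1^{2})$ and the conversion $e^{A/(2p_1)}-1\le A/p_1$ via $e^{x/2}\le 1+x$---are exactly what the paper uses for the upper half of the estimate. The genuine difference is in how you handle the absolute value. The paper argues the two sides separately: for the upper bound it uses $\log(1+\xi_q/q^3)\le \xi_q/q^3\le A/q^3$, and for the lower bound it uses $\log(1+\xi_q/q^3)\ge\log(1-A/q^3)\ge -2A/q^3$ (valid since $A/q^3\le 1/4$), arriving at $\prod\ge e^{-A/p_1}\ge 1-A/p_1$. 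Your positivity-of-coefficients trick $\bigl|\prod(1+b_q)^{p_1}-1\bigr|\le\prod(1+|b_q|)^{p_1}-1$ collapses both sides into a single upper-bound computation, which is cleaner and avoids the second logarithm inequality altogether. The price is that it relies on $p_1$ being an integer (for the binomial expansion), which is harmless here but would not transfer if one ever needed a non-integer power.
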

\begin{proof}
We have by assumption $\frac {\abs{\xi_q}} {q^3} \le \frac 1 4$, therefore
\begin{align*}
\prod_{p_1<q<p} \left( 1+ \frac {\xi_q} {q^3} \right)^{p_1}
 & = \exp \left\{ \sum_{q=p_1+1}^{p-1} p_1 \log \left( 1+ \frac {\xi_q} {q^3} \right) \right\}\\
 & \le \exp\left\{ \sum_{q=p_1+1}^{p-1} p_1 \frac {A} {q^3} \right\} \\
 & \le \exp\left\{ \frac A {2p_1} \right\} \le 1+ \frac A {p_1}.
\end{align*}
Similarly,
\begin{align*}
\prod_{p_1<q<p} \left( 1+ \frac {\xi_q} {q^3} \right)^{p_1}
& = \exp \left\{ \sum_{q=p_1+1}^{p-1} p_1 \log \left( 1+ \frac {\xi_q} {q^3} \right) \right\}\\
& \ge \exp \left\{ \sum_{q=p_1+1}^{p-1} p_1 \log \left( 1 - \frac A {q^3} \right) \right\}\\
& \ge \exp \left\{ \sum_{q=p_1+1}^{p-1} p_1 ( - \frac {2A} {q^3} ) \right\}\\
& \ge \exp \left\{ - \frac A {p_1} \right\} \ge 1 - \frac A {p_1}.
\end{align*}
\end{proof}
\begin{lemma}
$$
\sum_{N_0 \le p_1 \le p-3} \frac {\gamma^2} {(p-1)^2}
\abs{ \prod_{p_1 <q < p} \left(1+\frac {\xi_q} {q^3} \right)^{p_1} -1 }
\le \frac A {2p^2}.
$$
\end{lemma}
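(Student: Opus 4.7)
The plan is to reduce this estimate directly to the previous lemma, which provides the bound
$$
\left| \prod_{p_1 < q < p} \left(1 + \frac{\xi_q}{q^3}\right)^{p_1} - 1 \right| \le \frac{A}{p_1}
$$
for every $p_1 \ge N_0$ under exactly the hypothesis $|\xi_q| \le A$ with $N_0 \ge \max\{1,A\}$ that is in force here. So the only work is to sum the resulting bound against the weight $\gamma^2/(p-1)^2$.

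Interpreting $\gamma = p_1/p$ as in the main lemma, the weight becomes $p_1^2/\bigl(p^2(p-1)^2\bigr)$, and multiplying by the $A/p_1$ coming from the previous lemma gives a summand bounded by $p_1 A / \bigl(p^2(p-1)^2\bigr)$. Thus
\begin{align*}
\sum_{N_0 \le p_1 \le p-3} \frac{\gamma^2}{(p-1)^2}\left|\prod_{p_1<q<p}\left(1+\frac{\xi_q}{q^3}\right)^{p_1}-1\right|
&\le \frac{A}{p^2(p-1)^2} \sum_{p_1=N_0}^{p-3} p_1 \\
&\le \frac{A}{p^2(p-1)^2} \cdot \frac{(p-3)(p-2)}{2}.
\end{align*}
Since $(p-3)(p-2) \le (p-1)^2$ for $p \ge 3$, the right-hand side is at most $A/(2p^2)$, which is the claimed bound.

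There is essentially no obstacle: the previous lemma does the analytic work (converting the product minus $1$ into an $O(1/p_1)$ factor via the $\log$/$\exp$ inequalities from the elementary lemma), and what remains is an elementary arithmetic-sum estimate. The only mild care needed is the interpretation of $\gamma$ as $p_1/p$ consistent with the rest of Section 3, and checking the trivial inequality $(p-3)(p-2) \le (p-1)^2$ so that the two factors of $(p-1)^2$ cancel cleanly.
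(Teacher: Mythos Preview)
Your proof is correct and follows exactly the same route as the paper: invoke the preceding lemma to bound the product by $A/p_1$, write $\gamma^2/(p-1)^2 = p_1^2/\bigl(p^2(p-1)^2\bigr)$, and sum $p_1$ up to $p-3$, using $(p-3)(p-2)\le (p-1)^2$ to finish. The paper's own proof is in fact slightly terser, compressing your last two displayed steps into one.
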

\begin{proof}
we have
\begin{align*}
&\sum_{N_0 \le p_1 \le p-3} \frac {\gamma^2} {(p-1)^2}
\abs{ \prod_{p_1 <q < p} \left(1+\frac {\xi_q} {q^3} \right)^{p_1} -1 } \\
\le & \sum_{N_0 \le p_1 \le p-3} \frac {p_1^2} {(p-1)^2 p^2} \cdot \frac A {p_1} \\
\le & \frac {A} {2p^2}.
\end{align*}
\end{proof}
\begin{lemma} \label{lem_07_16_34}
Assume that $\abs{ \xi_q } \le A_1$, for any $q\ge N_0$, where $N_0 \ge \max\{1,A_1\}$.
Also assume that for some constant $A_2 >0$,
$$
\prod_{p_1 < q\le N_0} \left(1+ \frac {\xi_q} {q^3} \right)^{p_1} \le A_2, \quad
\forall\ 1\le p_1 \le N_0.
$$
Then we have
$$
\sum_{1\le p_1 \le p-3} \frac {\gamma^2} {(p-1)^2}
\abs{ \prod_{p_1<q<p} \left( 1+ \frac {\xi_q} {q^3} \right)^{p_1} -1}
\le \frac {A_1} {2p^2} + \frac {N_0^3} {3(p-1)^2 p^2} \left( 1 +A_2(1+ \frac {A_1}{N_0}) \right).
$$
\end{lemma}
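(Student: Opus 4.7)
The plan is to split the sum at $p_1 = N_0$ and treat the two ranges separately, using in each case the decomposition
$$
\prod_{p_1 < q < p}\Bigl(1 + \tfrac{\xi_q}{q^3}\Bigr)^{p_1}
= \Bigl(\prod_{p_1 < q \le N_0}\Bigl(1 + \tfrac{\xi_q}{q^3}\Bigr)^{p_1}\Bigr)\Bigl(\prod_{N_0 < q < p}\Bigl(1 + \tfrac{\xi_q}{q^3}\Bigr)^{p_1}\Bigr)
$$
together with $\gamma^2/(p-1)^2 = p_1^2/(p^2(p-1)^2)$.

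For the tail $N_0 \le p_1 \le p-3$, I would apply the preceding lemma directly, which gives $\bigl|\prod_{p_1<q<p}(1+\xi_q/q^3)^{p_1} - 1\bigr| \le A_1/p_1$. Multiplying by $p_1^2/(p^2(p-1)^2)$ and summing, I use $\sum_{p_1 \le p-3} p_1 \le (p-1)^2/2$ to obtain the contribution $A_1/(2p^2)$.

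For the head $1 \le p_1 < N_0$, the first factor of the decomposition is bounded in absolute value by $A_2$ by hypothesis. For the second factor, since $N_0 \ge A_1$ we have $|\xi_q|/q^3 \le 1/4$ for $q \ge N_0$, so the inequality $\log(1+x) \le x$ and $\sum_{q > N_0} q^{-3} \le 1/(2N_0^2)$ give
$$
\Bigl|\prod_{N_0 < q < p}\Bigl(1 + \tfrac{\xi_q}{q^3}\Bigr)^{p_1}\Bigr|
\le \exp\!\Bigl(p_1 \tfrac{A_1}{2N_0^2}\Bigr)
\le \exp\!\Bigl(\tfrac{A_1}{2N_0}\Bigr) \le 1 + \tfrac{A_1}{N_0},
$$
where the last step uses $e^{x/2} \le 1+x$ on $[0,1]$ (valid since $A_1/N_0 \le 1$). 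Combining and adding $1$ for the ``$-1$'', the head summand is bounded by $1 + A_2(1 + A_1/N_0)$, and summing against $p_1^2/(p^2(p-1)^2)$ with $\sum_{p_1 < N_0} p_1^2 \le N_0^3/3$ yields the second term of the stated bound.

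No step is genuinely hard; the only thing to watch is the bookkeeping of constants, in particular getting the factor $1 + A_2(1 + A_1/N_0)$ in the head range by correctly combining the hypothesized bound on $\prod_{p_1<q\le N_0}$ with the elementary bound on $\prod_{N_0<q<p}$, and verifying that the hypothesis $N_0 \ge \max\{1,A_1\}$ is exactly what is needed to apply the elementary $\log$-inequalities in the second factor.
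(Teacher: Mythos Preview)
Your proposal is correct and follows essentially the same approach as the paper: split the sum at $p_1=N_0$, handle the tail $N_0\le p_1\le p-3$ via the preceding lemma (giving $A_1/(2p^2)$), and for the head $p_1<N_0$ factor the product as $\prod_{p_1<q\le N_0}\cdot\prod_{N_0<q<p}$, bounding the first factor by $A_2$ and the second by $1+A_1/N_0$ via the elementary $\log$-inequalities, then summing $p_1^2\le N_0^3/3$. The only cosmetic difference is that the paper first replaces the exponent $p_1$ by $N_0$ in $\prod_{N_0<q<p}(1+A_1/q^3)^{p_1}$ before invoking the bound $1+A_1/N_0$, whereas you keep $p_1$ and use $p_1<N_0$ inside the exponential; the arithmetic is identical.
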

\begin{proof}
We have
\begin{align*}
&\sum_{p_1 < N_0} \frac {\gamma^2}
{(p-1)^2} \abs{ \prod_{p_1 <q <p} \left(1+\frac {\xi_q}{q^3} \right)^{p_1} -1} \\
\le & \sum_{p_1<N_0} \frac {p_1^2} {(p-1)^2 p^2}
 \left( A_2 \prod_{N_0 <q <p} \left( 1+ \frac {A_1} {q^3} \right)^{N_0} +1 \right) \\
\le & \sum_{p_1 < N_0} \frac {p_1^2} {(p-1)^2 p^2} \left(1 + A_2 ( 1+ \frac {A_1}{N_0} )\right)\\
\le & \frac {N_0^3} {3(p-1)^2 p^2} \left(1 + A_2 ( 1+ \frac {A_1}{N_0} )\right).
\end{align*}
\end{proof}
\begin{lemma}
Assume that $\abs{\xi_q} \le A$, for any $q\ge N_0$, where $N_0\ge \max\{1, A\}$. Then we have
$$
\abs{ \prod_{p_1<q<p} \left(1+ \frac{\xi_q} {q^3} \right)^{p_1} - \left( 1 +
 p_1\sum_{p_1<q<p} \frac {\xi_q} {q^3} \right) } \le \frac {A^2} {4p_1^2},
$$
for any $p_1 \ge N_0$.
\end{lemma}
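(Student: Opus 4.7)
The plan is to pass from the product to an exponential and then invoke the elementary Taylor-type bounds listed at the beginning of the appendix. Write $x_q = \xi_q/q^3$; the hypotheses $|\xi_q|\le A$ and $q>p_1\ge N_0 \ge \max\{1,A\}$ imply $|x_q|\le A/q^3\le 1/N_0^2\le 1/4$ (and in the extreme case $N_0=1$ one still has $|x_q|\le A/q^3\le 1/8$ since $q\ge 2$). Set
$$
S := p_1\sum_{p_1<q<p} x_q, \qquad T := p_1\sum_{p_1<q<p}\log(1+x_q),
$$
so that $\prod_{p_1<q<p}(1+x_q)^{p_1} = e^T$. Using the appendix inequalities $\log(1+x)\le x$ (valid for $x>-1$) and $\log(1+x)\ge x-x^2$ (valid for $|x|\le 1/4$), I obtain $S - E \le T \le S$, where
$$
E := p_1\sum_{p_1<q<p} x_q^{2} \le p_1 A^{2}\sum_{q>p_1}\frac{1}{q^{6}} \le \frac{A^{2}}{5 p_1^{4}}.
$$

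Next I would estimate $|S|\le A\cdot p_1\sum_{q>p_1} 1/q^{3} \le A/(2p_1)$ via the appendix tail-sum estimate. In particular $|T|\le |S|+E\le 1/2$ once we use $p_1\ge N_0\ge A$. Now applying the appendix inequalities $1+T\le e^T\le 1+T+T^{2}$ on the range $|T|\le 1/2$, and writing $e^T-1-S=(e^T-1-T)+(T-S)$, I get the two-sided sandwich
$$
-E \;\le\; T-S \;\le\; e^T - 1 - S \;\le\; T^{2} + (T-S) \;\le\; T^{2}.
$$
Plugging in $|T|\le A/(2p_1)+A^{2}/(5p_1^{4})$, the dominant piece of $T^{2}$ is $S^{2}\le A^{2}/(4p_1^{2})$, while the cross and remainder terms are $O(A^{3}/p_1^{5}) + O(A^{4}/p_1^{8})$, and $E\le A^{2}/(5p_1^{4})$. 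All lower-order contributions are comfortably absorbed into the stated constant $1/4$ thanks to $p_1\ge N_0\ge A$, yielding the claim.

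The proof is essentially a quantitative sharpening of the preceding lemma (which gave $|\prod-1|\le A/p_1$ by first-order Taylor expansion of $\log(1+x)$ followed by first-order expansion of $e^T$): here one simply keeps the quadratic term on both sides. The only point requiring a little care is verifying that the explicit constant $1/4$ survives the bookkeeping, which is why the sharp form of the tail estimate $\sum_{q>p_1}1/q^{3}\le 1/(2p_1^{2})$ is used; no genuinely new ingredient beyond the appendix's elementary inequalities is required.
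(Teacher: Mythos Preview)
Your approach is the same as the paper's---pass to $T=p_1\sum\log(1+x_q)$, sandwich $T$ between $S-E$ and $S$, then Taylor-expand $e^T$---and your lower bound $e^T-1-S\ge T-S\ge -E\ge -A^2/(5p_1^4)$ is correct (indeed, cleaner than the paper's slightly roundabout factorization through $\tfrac32(e^{-E}-1)$).

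There is, however, a real gap in your upper bound. You end at $e^T-1-S\le T^2$ and then assert that the cross and remainder pieces of $T^2$ are ``comfortably absorbed into the stated constant $1/4$.'' They are not: the main piece $S^2\le A^2/(4p_1^2)$ already \emph{saturates} the target, leaving no slack. Concretely, when $S<0$ one has $|T|\ge |S|$, so $T^2\ge S^2$; with $|S|$ near $A/(2p_1)$ the surplus $T^2-S^2\le 2|S|E+E^2$ is of order $A^3/p_1^5$, which relative to $A^2/(4p_1^2)$ is $4A/(5p_1^3)$---this can be as large as $4/5$ under the sole hypothesis $p_1\ge A\ge 1$, not negligible. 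So the inequality $T^2\le A^2/(4p_1^2)$ is false in general, and your chain does not deliver the stated constant.

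The one-line fix is to use monotonicity \emph{before} the quadratic expansion: since $T\le S$ and $|S|\le 1/2$,
\[
e^T\le e^S\le 1+S+S^2\le 1+S+\frac{A^2}{4p_1^2},
\]
which is exactly how the paper handles the upper side. No new ingredient is needed---just swap the order in which you apply $T\le S$ and $e^x\le 1+x+x^2$.
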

\begin{proof}
First we have
\begin{align*}
\abs{ \sum_{p_1 <q <p} \frac {p_1 \xi_q } {q^3} }
\le \sum_{p_1<q<p} \frac {p_1 A} {q^3}
\le \frac {A} {2p_1} \le \frac 1 2.
\end{align*}
Therefore
\begin{align*}
\prod_{p_1 <q <p} \left( 1 + \frac {\xi_q} {q^3} \right)^{p_1}
& = \exp\left\{ p_1 \sum_{p_1<q<p} \log \left(1+\frac {\xi_q}{q^3} \right)
      \right\} \\
& \le \exp\left\{ \sum_{p_1<q <p} \frac {p_1 \xi_q} {q^3} \right\} \\
& \le 1+ \sum_{p_1<q<p} \frac {p_1 \xi_q} {q^3}
  + \frac {A^2} {4p_1^2}.
\end{align*}
Similarly
\begin{align*}
&\prod_{p_1 <q <p} \left( 1 + \frac {\xi_q} {q^3} \right)^{p_1}
 - \left( 1 + p_1 \sum_{p_1<q<p} \frac {\xi_q} {q^3} \right) \\
\ge & \exp\left\{ p_1 \sum_{p_1<q<p} \left(\frac {\xi_q} {q^3} -
      \frac {\xi_q^2} {q^6} \right) \right\}
-\left( 1 + p_1 \sum_{p_1<q<p} \frac {\xi_q} {q^3} \right) \\
\ge & \frac 3 2 \left( e^{-p_1 \sum_{p_1<q<p} \frac {A^2} {q^6} } -1 \right)\\
\ge & \frac 3 2 \left( e^{-\frac {A^2} {5p_1^4} } -1 \right) \\
\ge & -\frac {3A^2} {10p_1^4} \ge - \frac {A^2} {4p_1^2}.
\end{align*}
\end{proof}
\begin{lemma} \label{lem_07_16_39}
Assume the following:
\begin{enumerate}
\item $\abs{\xi_q} \le A_1$, for any $q\ge N_0$, where $N_0 \ge \max\{1,A_1\}$.
\item $\abs{\xi_q} \le A_3$ for any $1\le q\le N_0$.
\item For some constant $A_2 >0$,
$$
\prod_{p_1 < q\le N_0} \left(1+ \frac {\xi_q} {q^3} \right)^{p_1} \le A_2, \quad
\forall\ 1\le p_1 \le N_0.
$$
\end{enumerate}
Then we have
\begin{align*}
&\sum_{1 \le p_1 \le p-3} \frac {\gamma} {p-1}
 \abs{ \prod_{p_1<q <p} \left( 1+ \frac {\xi_q} {q^3} \right)^{p_1} -
  \left( 1 + p_1\sum_{p_1<q<p} \frac {\xi_q} {q^3} \right) } \\
\le & \frac {A_1^2 \log (p-3) } {4p(p-1)} +
\frac 1 {p(p-1)} \left(
 \frac {1+A_2} 2 N_0^2 + \frac {(3A_2+1)A_1 +3 A_3} 6 N_0 \right).
\end{align*}
\end{lemma}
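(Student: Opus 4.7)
The plan is to split the outer sum at the threshold $p_1 = N_0$ and handle the two regimes with different tools. For $N_0 \le p_1 \le p-3$, the preceding lemma already delivers the sharp quadratic bound $|\mathrm{prod} - (1 + p_1 \sum \xi_q/q^3)| \le A_1^2/(4p_1^2)$. Since $\gamma/(p-1) = p_1/(p(p-1))$, one factor of $p_1$ cancels and the contribution collapses to a harmonic-type sum
$$
\frac{A_1^2}{4p(p-1)} \sum_{p_1=N_0}^{p-3} \frac{1}{p_1} \le \frac{A_1^2 \log(p-3)}{4p(p-1)},
$$
which is exactly the first advertised term.

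For $1 \le p_1 < N_0$ the sharp quadratic estimate is unavailable, so I bound the product and its linear approximation separately via the triangle inequality
$$
\left| \mathrm{prod} - 1 - p_1 \sum_{p_1<q<p} \frac{\xi_q}{q^3} \right| \le |\mathrm{prod}| + 1 + p_1 \sum_{p_1<q<p} \frac{|\xi_q|}{q^3}.
$$
I factor $\prod_{p_1<q<p}(1+\xi_q/q^3)^{p_1}$ as $\prod_{p_1<q\le N_0} \cdot \prod_{N_0<q<p}$. Hypothesis (3) bounds the first factor by $A_2$, while $\log(1+x)\le x$ combined with $\sum_{q>N_0} q^{-3} \le 1/(2N_0^2)$ bounds the second by $\exp(p_1 A_1/(2N_0^2))$, which is close to $1$ because $p_1 < N_0$ and $A_1 \le N_0$. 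Summing $p_1 \cdot (|\mathrm{prod}| + 1)$ against $\sum_{p_1<N_0} p_1 \le N_0^2/2$ yields the $\frac{1+A_2}{2} N_0^2$ piece of the target.

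The residual linearized correction $p_1 \sum_{p_1<q<p} |\xi_q|/q^3$ is controlled by splitting the inner sum at $N_0$: on $p_1 < q \le N_0$ use $|\xi_q| \le A_3$ together with $\sum_{q>p_1} q^{-3} \le 1/(2p_1^2)$ to obtain $A_3/(2p_1)$; on $N_0 < q < p$ use $|\xi_q| \le A_1$ to obtain $p_1 A_1/(2N_0^2)$. Multiplying by the outer weight $p_1/(p(p-1))$ and summing the second bound over $p_1 < N_0$ via $\sum_{p_1<N_0} p_1^2 \le N_0^3/3$ produces an $O(N_0/(p(p-1)))$ correction; combined with the residue coming from the product factor $1 + p_1 A_1/(2N_0)$, this reassembles into the $\frac{(3A_2+1)A_1 + 3A_3}{6} N_0$ term.

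The argument is just the triangle inequality plus the $N_0$-splitting; the main obstacle is purely arithmetic bookkeeping. To reproduce the exact coefficients $(1+A_2)/2$ and $((3A_2+1)A_1+3A_3)/6$ one must avoid absorbing any hidden factor of $N_0$ into $A_2$, and must use the normalization $N_0 \ge A_1$ at precisely the right place to turn $\exp(A_1/(2N_0))$ into a bound that is linear in $A_1/N_0$ rather than keeping it exponentiated. This careful accounting, together with the elementary sum bounds $\sum_{p_1<N_0} p_1^k \le N_0^{k+1}/(k+1)$, is what matches the stated explicit constants and makes the estimate usable in the inductive argument of Section 5.
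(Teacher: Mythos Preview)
Your proposal is correct and follows essentially the same route as the paper's proof: split the outer sum at $p_1=N_0$, invoke the preceding $A_1^2/(4p_1^2)$ lemma on the high range to produce the $\log(p-3)$ term, and on the low range use the triangle inequality together with the factorization $\prod_{p_1<q\le N_0}\cdot\prod_{N_0<q<p}$ for the product and the $N_0$--split $A_3/(2p_1)+p_1 A_1/(2N_0^2)$ for the linear part. The paper writes the product bound as $|\mathrm{prod}-1|\le 1+A_2(1+A_1/N_0)$ rather than separating $|\mathrm{prod}|+1$, but this is the same estimate, and the final summations via $\sum_{p_1<N_0}p_1\le N_0^2/2$ and $\sum_{p_1<N_0}p_1^2\le N_0^3/3$ assemble the constants exactly as you describe.
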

\begin{proof}
We have
\begin{align*}
&\sum_{N_0 \le p_1 \le p-3} \frac {\gamma} {p-1}
 \abs{ \prod_{p_1<q <p} \left( 1+ \frac {\xi_q} {q^3} \right)^{p_1} -
  \left( 1 + p_1\sum_{p_1<q<p} \frac {\xi_q} {q^3} \right) } \\
\le & \sum_{N_0\le p_1\le p-3} \frac {p_1} {p(p-1)} \cdot \frac {A_1^2} {4p_1^2} \\
\le & \frac {A_1^2 \log (p-3) } {4p(p-1)}.
\end{align*}
Also
\begin{align*}
\abs{ p_1 \sum_{p_1<q<p} \frac {\xi_q} {q^3} }
& \le p_1 \sum_{p_1<q\le N_0} \frac {A_3} {q^3}
+ p_1 \sum_{N_0 <q <p} \frac {A_1} {q^3} \\
& \le \frac {A_3} {2p_1} + \frac {p_1 A_1} {2N_0^2}.
\end{align*}
This gives us
\begin{align*}
&\sum_{1 \le p_1 < N_0} \frac {\gamma} {p-1}
 \abs{ \prod_{p_1<q <p} \left( 1+ \frac {\xi_q} {q^3} \right)^{p_1} -
  \left( 1 + p_1\sum_{p_1<q<p} \frac {\xi_q} {q^3} \right) } \\
\le & \frac 1 {p(p-1)}
 \sum_{1\le p_1 <N_0} p_1 \cdot
 \left( \abs{ \prod_{p_1<q<p} \left( 1+ \frac {\xi_q} {q^3} \right)^{p_1} -1}
  + \abs{ p_1\sum_{p_1<q<p} \frac {\xi_q} {q^3} } \right) \\
\le & \frac 1 {p(p-1)} \sum_{1\le p_1 <N_0} p_1
\cdot \left( 1 + A_2 ( 1+ \frac {A_1}{N_0} )
 + \frac {A_3} {2p_1} + \frac {p_1 A_1} {2N_0^2} \right)\\
 \le & \frac 1 {p(p-1)} \left(
 \frac {1+A_2} 2 N_0^2 + \frac {(3A_2+1)A_1 +3 A_3} 6 N_0 \right).
\end{align*}
\end{proof}
\section*{Acknowledgment}
The author would like to thank Jean Bourgain, Tom Spencer and Ya G. Sinai for
their interests in this work.

\end{document}